\newtheorem{theorem}{Theorem}[section]
\newtheorem{lemma}{Lemma}[section]
\newtheorem{proposition}{Proposition}[section]
\newtheorem{observation}{Observation}[section]
\newcommand{\R}{{\cal{R}}}
\newcommand{\marker}{M}
\newcommand{\D}{D}
\newcommand{\C}{\mathcal{C}}
\newcommand{\W}{\mathcal{W}}
\newcommand{\F}{\mathcal{F}}
\newcommand{\reversal}[1]{\varphi(#1)}
\newcommand{\distPC}{7}
\newcommand{\distPM}{3}
\newcommand{\distPCmin}{5}
\newcommand{\distMC}{4}
\begin{document}

\title{\bf Building a Nest by an Automaton}

\author{
Jurek Czyzowicz\footnotemark[1]\\
{\small D\'{e}partement d'informatique}\\
{\small Universit\'{e} du Qu\'{e}bec en Outaouais, Canada}\\{\small \sl jurek@uqo.ca}
\and
Dariusz Dereniowski\\
{\small Faculty of Electronics, Telecommunications}\\{\small and Informatics}\\ {\small Gdansk University of Technology, Poland}\\ {\small \sl deren@eti.pg.edu.pl}
\and
Andrzej Pelc\footnotemark[2]\\ 
{\small D\'{e}partement d'informatique}\\
{\small Universit\'{e} du Qu\'{e}bec en Outaouais, Canada}\\ {\small \sl pelc@uqo.ca}
}
\date{ }
\maketitle
\def\thefootnote{\fnsymbol{footnote}}

\footnotetext[1]{Supported in part by NSERC discovery grant }

\footnotetext[2]{Supported in part by NSERC discovery grant 2018-03899 and by
the Research Chair in Distributed Computing of the
Universit\'{e} du Qu\'{e}bec en Outaouais.}

\centerline{\today}


\maketitle

\begin{abstract}
A robot modeled as a deterministic finite automaton has to build a structure from material available to it. The robot navigates in the infinite oriented grid $\mathbb{Z} \times \mathbb{Z}$. Some cells of the grid are full (contain a brick) and others are empty. The subgraph of the grid induced by full cells, called the {\em field}, is initially connected. The (Manhattan) distance between the farthest cells of the field is called its {\em span}. The robot starts at a full cell. It can carry at most one brick at a time. At each step it can pick a  brick from a full cell, move to an adjacent cell and drop a brick at an empty cell. The aim of the robot is to construct the most compact possible structure composed of all bricks, i.e., a {\em  nest}. That is, the robot has to move all bricks in such a way that the span of the resulting field be the smallest.   

Our main result is the design of a deterministic finite automaton that accomplishes this task and subsequently stops,  for every initially connected field, in time $O(sz)$, where $s$ is the span of the initial field and $z$ is the number of bricks. We show that this complexity is optimal.
  \vspace{1ex}

  \noindent
  {\bf Keywords:}  finite automaton, plane, grid, construction task, brick,
                  mobile agent, robot.  
\end{abstract}

\thispagestyle{empty}
\newpage
\setcounter{page}{1}

\baselineskip    0.19in
\parskip         0.0in
\parindent       0.3in

\section{Introduction}

{\bf The problem.}
A mobile agent (robot) modeled as a deterministic finite automaton has to build a structure from material available to it. The robot navigates in the infinite oriented grid $\mathbb{Z} \times \mathbb{Z}$ represented as the set of unit square cells in the two-dimensional plane,
with all cell sides vertical or horizontal. The robot has a compass enabling it to move from a currently occupied cell to one of the four cells (to the North, East, South, West) adjacent to it. Some cells of the grid contain a brick, i.e., 
 are {\em full}, other cells are {\em empty}. The subgraph of the grid induced by full cells, called the {\em field}, is initially connected. The (Manhattan) distance between the farthest cells of the field is called its {\em span}. Notice that the span of any current field may be much smaller than its diameter as a subgraph of  the grid. In fact, this diameter may be sometimes undefined, if the field becomes disconnected.
 The robot starts at a full cell. It can carry at most one brick at a time. At each step, the robot can pick a brick from the currently occupied full cell (if it does not carry any brick at this time), moves to an adjacent cell, and can drop a brick at the currently occupied empty cell (if it carries a brick). The robot has no a priori knowledge of the initial field, of its span or of the number of bricks.  
 
 The aim of the robot is to construct the most compact possible structure composed of all bricks, i.e., a {\em  nest}. That is, the robot has to move all bricks in such a way that the span of the resulting field be the smallest. The above task has many real applications. In the natural world, animals use material scattered in a territory (pieces of wood, small branches, leaves) to build a nest, and minimizing the span is important to better protect it. A mobile robot may be used to clean a territory littered by hazardous material, in which case minimizing the span of the resulting placement  of contaminated pieces facilitates subsequent decontamination. A more mundane example is the everyday task of sweeping the floor, whose aim is to gather all trash in a small space and then get rid of it. 

\noindent 
{\bf Our results.}
 Our main result is the design of a deterministic finite automaton that accomplishes the task of building a nest and subsequently stops,  for every initially connected field, in time $O(sz)$, where $s$ is the span of the initial field and $z$ is the number of bricks. The time is defined as the number of moves of the robot. We show that this complexity is optimal.
 
 The essence of our nest building algorithm is to instruct the robot to make a series of trips to get consecutive bricks, one at a time, and carry them to some designated compact area. This approach ensures the optimal complexity. (In order to show where the problem is, we also sketch a much simpler algorithm that uses another approach but has significantly larger complexity).
 There are two major difficulties to carry out this plan. The first is that the span of the initial field may be much larger than the memory of the robot, and hence the robot that already put several bricks in a compact area and goes for the next brick cannot remember the way back to the area where  it started building.
 Thus we need to prepare the way, so that the robot can recognize the backtrack path locally at each decision point. The second problem is that, while we temporarily disconnect the field during the execution of the algorithm, special care has to be taken so that the connected components of intermediary fields be close to each other, to prevent the robot from getting lost in large empty spaces.
 
 To the best of our knowledge, the task of constructing structures from available material using an automaton, has never been studied before in the algorithmic setting. It is interesting to compare this task to that of exploration of mazes by automata, that is a classic topic with over 50 years of history (see the section ``Related work''). It follows from the result of Budach \cite{B78} (translated to our terminology) that if an automaton can only navigate in the field without moving bricks then it cannot explore all connected fields, even without the stop requirement, i.e., it cannot even see all bricks. By contrast, it follows from our result that the ability of moving bricks enables the automaton not only to see all bricks but to build a potentially useful structure using all of them and stop, and to accomplish all of that with optimal complexity.

\noindent
{\bf The model.}
 We consider the infinite oriented grid $\mathbb{Z} \times \mathbb{Z}$ represented as the set of unit square cells tiling the two-dimensional plane,
with all cell sides vertical or horizontal. Each cell has 4 adjacent cells, North, East, South and West of it. Some cells of the grid contain a brick, i.e., 
 are {\em full}, other cells are {\em empty}. The subgraph of the grid induced by full cells is initially connected. 
 At each step of the algorithm this subgraph can change, due to the actions of the robot, described below. At each step, the subgraph induced by the full cells is called the current {\em field}.  Any maximal connected  subgraph of the current field is called a {\em component}.
 Throughout the paper, the {\em distance} between two cells $(x,y)$ and $(x',y')$  of the grid is the Manhattan distance between them, i.e., $|x-x'|+|y-y'|$.
 The number of cells of a field is called its {\em size}, and the distance between two farthest cells of a field is called its {\em span}. 
 A nest of size $z$ is a field that has the minimum span among all fields of size $z$.

We are given a mobile entity (robot) starting in some cell of the initial field and traveling in the grid. 
The robot has a priori no knowledge of the field, of its size or of its span.
The robot is formalized as a finite deterministic Mealy automaton
$\R=(X,Y,\S,\delta,\lambda,S_0, S_{\textup{f}})$. 
 $X=\{e,f\} \times \{l,h\}$ is the input alphabet, 
$Y=\{N,E,S,W\} \times \{e,f\} \times \{l,h\}$ is the output alphabet. $\S$ is a finite set of states
with two special states: $S_0$ called initial and $S_{\textup{f}}$ called final.
$\delta:\S\times
X\to\S$ is the state transition function, and $\lambda:\S\times X \to
Y$ is the output function.

The meaning of the input and output symbols is the following. At each step of its functioning, the robot is at some cell of the grid
and has some weight: it is either light, denoted by $l$ (does not carry a brick) or heavy, denoted by $h$ (carries a brick). Moreover, the current cell is either empty, denoted by $e$ or full, denoted by $f$.
The input $x\in \{e,f\} \times \{l,h\}$ gives the automaton information about these facts. The robot is in some state $S$. Given this state and the input $x$,
the robot outputs the symbol $\lambda(x,S)\in \{N,E,S,W\} \times \{e,f\} \times \{l,h\}$ with the following meaning. The first term indicates the adjacent cell to which the robot moves: North, East, South or West of the current cell. The second term determines whether the robot leaves the current cell empty or  full, and the third term indicates whether the robot transits as heavy or as light to the adjacent cell. Since the robot can only either leave the current cell intact and not change its own weight, or pick a brick from a full cell leaving it empty (in case when the robot was previously light), or drop a brick on an empty cell leaving it full (in case when the robot was previously heavy), we have the following restrictions on the possible values of the output function $\lambda$:

$\lambda(S,e,l)$ must be $(\cdot, e,l)$

$\lambda(S,e,h)$ must be either $(\cdot,e,h)$ or $(\cdot,f,l)$

 $\lambda(S,f,l)$ must be either $(\cdot,f,l)$ or $(\cdot,e,h)$
 
 $\lambda(S,f,h)$ must be $(\cdot,f,h)$
 
 Seeing the input symbol $x$ and being in a current state $S$, the robot makes the changes indicated by the output function (it goes to the indicated
 adjacent cell, possibly changes the filling of the current cell as indicated and possibly changes its own weight as indicated), and transits to state $\delta(x,S)$.
 The robot starts light in a full cell in the initial state $S_0$ (hence its initial input symbol is $(f,l)$) and terminates its action in the final state $S_{\textup{f}}$.

\noindent
{\bf Related work.}
Problems concerning exploration and navigation performed by mobile agents or  robots in an unknown
environment have been studied for many years
(cf.~\cite{H89,RKSI}). The relevant literature can be divided into two parts, according to the environment where the robots operate:
it can be either a geometric terrain, possibly with obstacles, or a network modeled as a graph in which the robot moves along edges.

In the geometric context, a closely related problem is that of pattern formation \cite{DFSY,DPV,SY}. Robots, modeled as points freely moving in the plane have to arrange themselves to form a pattern given as input. This task has been mostly studied in the context of  asynchronous oblivious robots having full visibility of other robots positions.

The graph setting can be further specified in two different
ways. In \cite{AH,BFRSV,BS,DP,FI04} the robot explores strongly
connected directed graphs and it can move only in the tail-to-head
direction of an edge, not vice-versa. In
\cite{ABRS,BRS2,B78,DFKP,DJMW,DKK,PaPe,R79-80} the explored
graph is undirected and the robot can traverse edges in both
directions. Graph exploration scenarios can be also classified in another important way. 
It is either assumed that nodes of the
graph have unique labels which the robot can recognize (as in,
e.g.,~\cite{DP,DKK,PaPe}), or it is assumed that nodes are anonymous
(as in, e.g.,~\cite{BFRSV,BS,B78,CDK,R79-80}). In our case, we work with the infinite anonymous grid, hence it is
an undirected anonymous graph scenario.
The efficiency measure adopted in papers dealing with
graph exploration is either the completion time of this task, measured
by the number of edge traversals, (cf., e.g.,~\cite{PaPe}), or the
memory size of the robot, measured either in bits or by the number of
states of the finite automaton modeling the robot (cf.,
e.g., \cite{DFKP,FIPPP,FI04}). We are not concerned with minimizing the memory size but we assume
that this memory is finite. However we want to minimize the time of our construction task.

The capability of a robot to explore anonymous undirected graphs has
been studied in, e.g., \cite{BK78,B78,DFKP,FIPPP,K79,R79-80}. In
particular, it was shown in \cite{R79-80} that no finite automaton can
explore all cubic planar graphs (in fact no finite set of finite
automata can cooperatively perform this task). Budach \cite{B78} proved that a single automaton cannot explore all mazes
(that we call connected fields in this paper).
Hoffmann \cite{H81} proved that one pebble does not help to do it. By contrast, Blum and Kozen \cite{BK78} showed that this task can be accomplished by two cooperating automata or by a single automaton with two pebbles.
The size of port-labeled graphs which cannot be explored by a given robot was
investigated in \cite{FIPPP}.

Recently a lot of attention has been devoted to the problem of searching for a target hidden in the infinite anonymous oriented grid by cooperating agents modeled as either deterministic or probabilistic automata. Such agents are sometimes called ants. It was shown in \cite{ELSUW} that 3 randomized or 4 deterministic automata can accomplish this task. Then matching lower bounds were proved: the authors of \cite{CELU} showed that 2 randomized automata are not enough for target searching in the grid, and the authors of \cite{BUW} proved that 3 deterministic automata are not enough for this task. Searching for a target in the infinite grid with obstacles was considered in \cite{LKUW}.

Our present paper adopts the same model of environment as the above papers, i.e., the infinite anonymous oriented grid. However the task we study is different: instead of searching for a target, the robot has to build some structure from the available material. To the best of our knowledge, such construction tasks performed by automata have never been studied previously in the algorithmic setting.

\section{Terminology and preliminaries}

In the description and analysis of our algorithm we will categorize full cells.
A full cell is said to be a \emph{border cell} if it is adjacent to an empty cell.
A full cell that has only one full adjacent cell is called a \emph{leaf}.
A full cell is called {\em special}, if it is either a leaf, or has at least two full cells adjacent to it, sharing a corner.

A finite deterministic automaton may remember a constant number of bits by encoding them in its states.
We will use this fact to define several simple procedures and simplifications that we use in the sequel.
The first simplification is as follows.
We formulate the actions of the robot based on the configuration of bricks in its neighborhood.
More precisely, at any step, the robot knows whether each cell at distance  at most $r=8$ from its current cell is full or empty.
This can be achieved by performing a bounded local exploration with return, after each move of the robot.

We will use the notion of current \emph{orientation} of the robot.
At the beginning of its navigation, the robot goes in one of the four cardinal directions. Then its orientation is determined in one of the two ways:
either by its last step (North, East, South or West) or by a \emph{turn}: we say that the robot \emph{turns left} (respectively \emph{right}) meaning that it changes its orientation in the appropriate way while remaining at the same cell. Clearly the robot can remember its orientation, using its states.
We refer to cardinal directions with respect to this current orientation. Thus, e.g.,  if the robot is oriented East then we say that its adjacent North
(resp. East, South or West) cell is 
\emph{left} (resp. \emph{in front}, \emph{right}, \emph{back}) of it.

Whenever we say that the robot located at a cell $c$  and not carrying a brick \emph{brings} a brick from a full cell $c'$ to $c$ we mean that 
the robot moves from  $c$ to $c'$, picks the brick from $c'$, moves back to $c$ and restores its original orientation.
Whenever we say that the robot located at a cell $c$ and carrying a brick \emph{places} it at an empty cell $c'$
we mean that  the robot moves from  $c$ to $c'$, drops the brick at $c'$, moves back to $c$ and restores its original orientation.

Whenever the robot selects a cell according to some condition that  is fulfilled by more than one cell, the robot selects the cell that is minimal with respect to the following total order $\prec$ on the set of all cells.
For  cells $c=(x,y)$ and $c'=(x',y')$, $c\prec c'$ holds if and only if either $y<y'$, or $y=y'$ and $x\leq x'$.
We denote by $|S|$ the number of cells in a sequence or a set $S$ of cells.

We define a {\em disc} of radius $r\geq 0$ with center $c$ to be the set of all cells at distance at most $r$ from $c$, see Fig. \ref{fig:disc}.  A disc of radius $r$ contains 
$z_r=2(1+3+5+\cdots +(2r-1))+(2r+1)=2r^2+2r+1$ cells and has span $2r$. A {\em rough disc} of size $z$, where $z_r\leq z<z_{r+1}$ is defined as follows. If $z=z_r$, then the rough disc is the disc of radius $r$. Otherwise, let $F$ be the set of cells not belonging to the disc $D$ of radius $r$ but adjacent to some cell of it. Add to $D$ exactly
$z-z_r$ cells belonging to $F$, starting from the North neighbor of the East-most cell of $D$ and going counterclockwise.
If $z_r<z \leq z_r+2r+2$ then the rough disc of size $z$ has span $2r+1$ and if $z_r+2r+2<z<z_{r+1}$ then the rough disc of size $z$ has span $2r+2$, 
the same as the disc of radius $r+1$ that has size $z_{r+1}$. 

\begin{figure}[htb!]
\begin{center}
\includegraphics[scale=0.5]{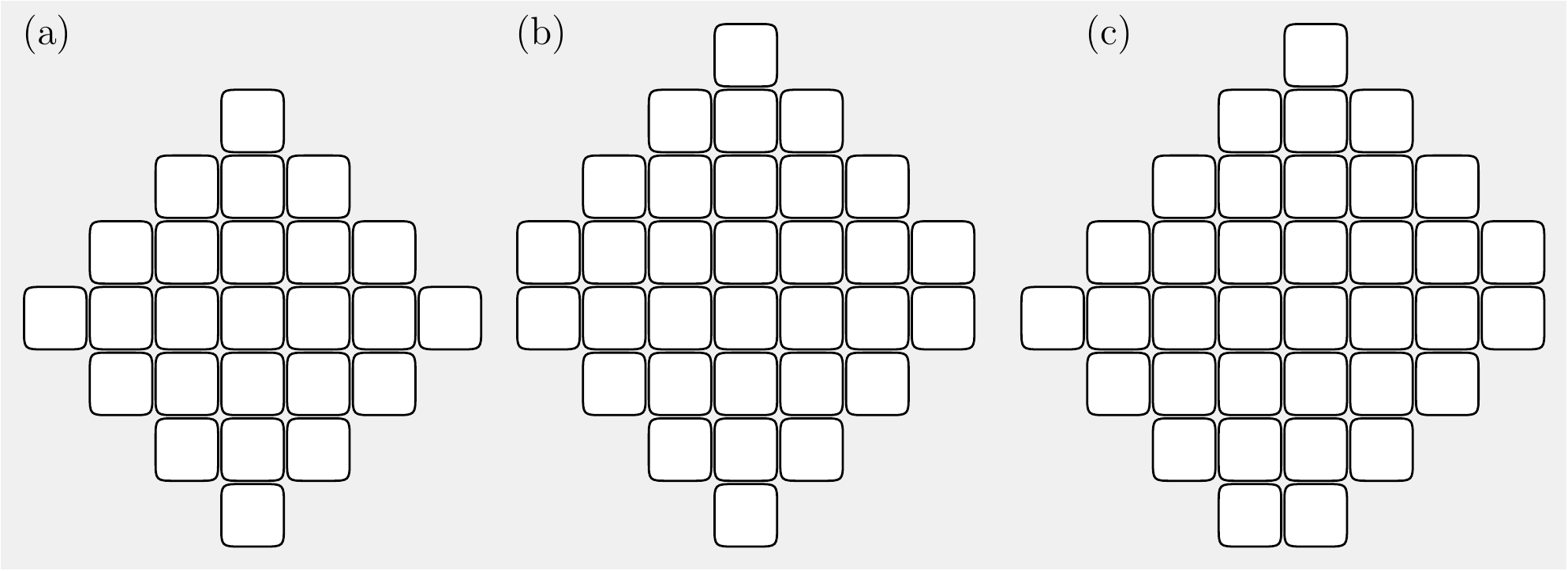}
\caption{(a) disc of size $z_r$ for $r=3$; (b) a rough disc of size
$z_r+7$ and span $2r+1$, $r=3$; (c) a rough disc of size $z_r+11$ and
span $2r+2$, $r=3$}
\label{fig:disc}
\end{center}
\end{figure}

\begin{proposition} \label{pro:nest}
Any rough disc is a nest.
\end{proposition}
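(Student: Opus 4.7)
The plan is to bound from below the span of any field of a given size, and then observe that rough discs match this lower bound. The key tool is a packing argument obtained by converting the Manhattan metric into the Chebyshev metric.

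First, I would apply the linear map $\Phi:(x,y)\mapsto(x+y,x-y)$. This is a bijection of $\mathbb{Z}^2$ onto the sublattice $L=\{(u,v)\in\mathbb{Z}^2:u\equiv v\pmod 2\}$, and the identity $|a|+|b|=\max(|a+b|,|a-b|)$ shows that $\Phi$ carries the $L_1$ distance on $\mathbb{Z}^2$ to the $L_\infty$ distance on $L$. A field of span $d$ is therefore sent to a subset of $L$ of Chebyshev diameter $d$, which fits inside some axis-aligned $(d+1)\times(d+1)$ lattice square. Such a square contains $(d+1)^2$ integer points, at most $\lceil(d+1)^2/2\rceil$ of which lie in $L$. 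Specializing to $d=2r$ and $d=2r+1$ yields
\[
\lceil(2r+1)^2/2\rceil=2r^2+2r+1=z_r,\qquad \lceil(2r+2)^2/2\rceil=2r^2+4r+2=z_r+2r+1,
\]
so any field of span at most $2r$ has at most $z_r$ cells, and any field of span at most $2r+1$ has at most $z_r+2r+1$ cells.

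Second, I would compute the span of the rough disc of each size directly from its construction. The disc of radius $r$ has span $2r$, realized by $(r,0)$ and $(-r,0)$. When boundary cells at distance $r+1$ from the center are added one at a time in the counterclockwise order of the construction, the span first becomes $2r+1$ (any such added cell together with the appropriate opposite point of the disc witnesses this), and it stays equal to $2r+1$ until the added cells wrap far enough around the outer ring for two of them to be at mutual $L_1$ distance $2r+2$, after which the span equals $2r+2$ and stays so up to completion of the disc of radius $r+1$.

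Combining the two ingredients, for every size $z$ the rough disc of size $z$ achieves the minimum span compatible with the packing bound, so it is a nest. I expect the main technical chore to lie in the second step, specifically in identifying the first counterclockwise step at which a pair of newly added cells reaches distance $2r+2$; this is a concrete but elementary calculation on the outer ring of $4(r+1)$ cells and raises no conceptual difficulty.
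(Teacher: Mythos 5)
Your proof is correct, but it reaches the packing bound by a genuinely different route than the paper. The paper spends most of its effort characterizing maximum-size fields of a given span combinatorially: it shows by a local exchange argument that any extremal field has a ``staircase'' border, introduces the class $\mathcal{F}_s$ of such fields, and counts cells row by row to obtain $h(2r)=z_r$ and $h(2r+1)=z_r+2r+1$. Your $45^\circ$ rotation $\Phi:(x,y)\mapsto(x+y,x-y)$ gets the same two numbers in three lines, since $\lceil(2r+1)^2/2\rceil=z_r$ and $(2r+2)^2/2=z_r+2r+1$, and it is cleaner and less error-prone; what it gives up is the structural description of the extremal fields, which the proposition does not need anyway. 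The one point you must not skip is the coincidence between the two transition thresholds: the packing bound forces span at least $2r+2$ exactly when $z>z_r+2r+1$, so you need the rough disc of size $z_r+2r+1$ to still have span $2r+1$ and the next one to jump to $2r+2$. This does check out: the first $2r+1$ ring cells added counterclockwise from $(r,1)$ all satisfy $y\geq 1$, and for two such cells $|x_1-x_2|+|y_1-y_2|=2r+2-2\min(y_1,y_2)\leq 2r$, while the $(2r+2)$-th cell $(-(r+1),0)$ is at distance $2r+2$ from $(r,1)$. Note that this calculation shows the threshold stated in the paper's definition of a rough disc (span $2r+1$ for $z\leq z_r+2r+2$) is off by one, and is in fact inconsistent with the paper's own formula $h(2r+1)=z_r+2r+1$; your bound places the break correctly at $z_r+2r+1$, and the proposition holds either way because the rough disc's span and the lower bound switch at the same size.
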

\begin{proof}
We write $d(a,b)$ to denote the distance between two cells $a$ and $b$.
For a given span $s$, consider a field $C_s$ that has span $s$ and a maximum number of cells among fields of this span. 
Consider three consecutive full cells $f_1,f_2$ and $f_3$ of $C_s$ in a row or in a column.
Let $c_1,c_2$ and $c_3$ be three consecutive cells such that $c_i$ is adjacent to $f_i$, $i\in\{1,2,3\}$.
We prove that $c_2$ is full.
We proceed by contradiction, i.e., suppose that $c_2$ is empty.
Consider the field $C'$ obtained from $C_s$ by adding a brick to the cell $c_2$.
By definition of $C_s$, the span of $C'$ is larger than $s$.
Thus, there exists a shortest path $\W=(w_1,\ldots,w_{s+2})$ in the grid, that connects two full cells $w_1$ and $w_{s+2}$ of $C'$.
If $c_2$ is different from $w_1$ and $w_{s+2}$  then $C_s$ has span larger than $s$, which is a contradiction.
Thus,  without loss of generality, we can assume $c_2=w_{s+2}$.
We have $d(w_1,f_2)\geq s$ because the length of $\W$ is $s+1$ and $f_2$ and $c_2$ are neighbors.
Thus, since $f_2$ is a neighbor of both $f_1$ and $f_3$, we have either $d(f_1,w_{s+2})\geq s+1$ or $d(f_3,w_{s+2})\geq s+1$.
Since $f_1,f_3$ and $w_{s+2}$ are in $C_s$, this gives a contradiction with the span of $C_s$ being at most $s$. Hence $c_2$ is full.

This implies that $C_s$ has the following property: each border cell of $C_s$ is adjacent to two or three empty cells.
Define an {\em extremity} to be an East-most, West-most, North-most or South-most cell of a field.
Thus, either there exists a single East-most (respectively West-most, North-most or South-most) extremity, or there are two adjacent such extremities.
Moreover, each border cell of $C_s$ that is not an extremity does not have an adjacent border cell. (Intuitively, the border is composed of ``staircases'').
Denote by $\F_s$ the class of fields with the above two properties. Note that $C_s$ which has maximum size among fields of span $s$ belongs to $\F_s$.

We will compute the number of cells in any field $C$ in the class $\F_s$.
Partition the cells of $C$ into rows $R_1,\ldots,R_t$ counted from North to South.
We have proved that $|R_1|\in\{1,2\}$ and $|R_l|\in\{1,2\}$ because these sets consist of extremities only.
We first consider the case when there is a total of four extremities in $C$.
Thus, in particular, $|R_1|=|R_t|=1$. (Note that, in the considered case, $s$ must be even).
Let $j$ be the maximal index such that $|R_j|<|R_{j+1}|$ and let $j'$ be the minimal index such that $|R_{j'-1}|\leq|R_{j'}|$.
Since for each $i\in\{2,\ldots,j\}$, $|R_i|=|R_{i-1}|+2$, we obtain $\sum_{i=1}^{j}|R_i|=j^2$.
By symmetry, $\sum_{i=j'}^{t}|R_i|=j^2$.
Denote $m=j'-j-1$.
The span of $C$ equals $s=t+m-2=2j+2m-2$, and hence $m=s/2-j+1$.
We have $|R_i|=2j+1$ for each $i\in\{j+1,\ldots,j'-1\}$.
This implies $\sum_{i=j+1}^{j'-1}|R_i|=m(2j+1)=j(s+1)-2j^2+s/2+1$.
Thus, the size of $C$ is $z(j)=\sum_{i=1}^{t}|R_i|=j(s+1)+s/2+1$.
It follows that for a field of maximum size in $\F_s$ that has a total of four extremities, the index $j$ must be as large as possible, i.e., $j=s/2$.
Hence such a field is a disc of span $s$, i.e., of radius $r=s/2$.

Now consider all other fields in the class $\F_s$, i.e., those in which some of the extremities are pairs of adjacent cells. Every such field $F'$ can be obtained from a field $F$ in $\F_{s-1} \cup \F_{s-2}$ with exactly four extremities, by adding bricks in some cells adjacent to the border of $F$.
The number of cells in each such field $F'$ equals $|F|+c_1x+c_2y$ for some nonnegative constants $c_1$ and $c_2$,
where $x$ is the span of $F$ and $y$ is the index $j$ in the preceding paragraph, for the field $F$.
Consider a field $F'$ of maximum size among all fields in $\F_s$. Hence the corresponding field $F$ has maximum size among all fields in $\F_{s-1}$ or in $\F_{s-2}$
that have exactly four extremities. As proved above, such a field $F$ must be a disc.
Denote by $h(s)$ the maximum size of a field in $\F_s$.
In particular we have that $h(s)=\max\{z_{s/2},z_{(s-2)/2}+2s-1\}=z_{s/2}$ for even $s$, and $h(s)=z_{(s-1)/2}+s$ for odd $s$. 

Let $z$ be such that $z_r\leq z<z_{r+1}$ and consider a nest $N$ of size $z$. Denote its span by $s_z$.
We have three cases depending on the possible values of $z$.

First suppose that $z=z_r$.
We have $s_z\geq 2r$ because otherwise, for a maximum-size field $M$ in $\F_{2r-1}$, we would have $|M|\geq|N|$, and $|M|= h(2r-1)=z_{r-1}+2r=2r^2+1<z_r$, which would imply $|N|<z$, contradicting the definition of $N$.
Thus, the disc of size $z=z_r$ has span at most $s_z$ and thus it is a nest.

Next suppose that $z_r<z\leq z_r+2r+2$.
We have $s_z\geq 2r+1$ because otherwise, for a maximum-size field $M$ in $\F_{2r}$, we would have $|M|\geq|N|$, and $|M|= h(2r)=z_{r}<z$, which would imply $|N|<z$, contradicting the definition of $N$.
Hence, the rough disc of size $z$ has span at most $s_z$ and thus it is a nest.

Finally suppose that $z_r+2r+2<z< z_{r+1}$.
We have $s_z\geq 2r+2$ because otherwise, for a maximum-size field $M$ in $\F_{2r+1}$, we would have $|M|\geq|N|$, and $|M|\leq h(2r+1)=z_{r}+2r+1<z$, which would imply $|N|<z$, contradicting the definition of $N$.
This proves that the rough disc of size $z$ has span at most $s_z$ and thus it is a nest.
\end{proof}

The nests built by our automaton will be discs or rough discs, depending on the number of available bricks.
The following proposition shows that the complexity of our nest-building algorithm is optimal, regardless of the relation between the size of the initial field and its span (recall that, by definition, the span must be smaller than the size $z$ and  it must be in $\Omega(\sqrt{z})$). Our lower bound on complexity follows from geometric properties of the grid, and hence it holds regardless of the machine that builds the nest,
i.e.,  even if the robot is a Turing machine knowing a priori the initial field.

\begin{proposition}\label{pro:lb}
Let $s'<z$ be positive integers such that $s' \in \Omega(\sqrt{z})$. There exists an initial field of size $z$ and span $s\in \Theta(s')$, such that any algorithm that builds a nest starting from this field must use time $\Omega(sz)$. 
\end{proposition}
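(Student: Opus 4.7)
For each $s'$ I would exhibit an axis-aligned rectangular initial field and then prove the lower bound by a first-moment inequality. Since the argument is purely geometric, it applies to any algorithm (even an omniscient one) that transforms $I$ into some nest.

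\emph{Construction.} Given $s'$ with $c_0\sqrt{z}\le s'<z$, put $L=\max\{\lceil s'/2\rceil,\lceil\sqrt{z}\rceil\}$ and $W=\lceil z/L\rceil$, and take $I$ to be an $L\times W$ axis-aligned rectangle with the $LW-z<L$ excess cells trimmed from one row (keeping $I$ connected). Then $|I|=z$, the span of $I$ is $s=L+W-2$, and since $s'\in\Omega(\sqrt{z})$ forces $L\ge W$ and $L=\Theta(s')$, we have $s\in\Theta(s')$.

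\emph{Lower bound via first moments.} Since the robot carries at most one brick at a time, a single move displaces at most one brick by one unit; hence the total number of moves is at least the optimal transport cost $T(I,F)=\min_\pi\sum_{c\in I}d(c,\pi(c))$ between $I$ and any final nest $F$, where $\pi$ ranges over bijections $I\to F$. Let $M(X)=\min_c\sum_{p\in X}d(p,c)$ denote the Manhattan 1-median cost. Applying the triangle inequality with $c^*$ equal to the 1-median of $F$, and using that $\pi$ is a bijection, yields $T(I,F)\ge M(I)-M(F)$. A direct summation gives $M(I)=z(L+W)/4+O(z)=\Theta(zs)$ for the rectangle. By Proposition~\ref{pro:nest}, $F$ is a translate of a rough disc of size $z$; since $M$ is translation-invariant, $M(F)$ equals the first moment of the Manhattan ball of area $z$, namely $\frac{\sqrt{2}}{3}\,z^{3/2}+O(z)$ (from $\int_{|x|+|y|\le r}(|x|+|y|)\,dx\,dy=4r^3/3$ with $2r^2\approx z$).

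\emph{Main obstacle.} When $s\gg\sqrt{z}$ the $\Theta(zs)$ term dominates $\Theta(z^{3/2})$ and $T(I,F)=\Omega(zs)$ follows at once. The delicate regime is $s=\Theta(\sqrt{z})$, where both terms are $\Theta(z^{3/2})$; there the result comes down to the strict inequality $\frac{1}{2}>\frac{\sqrt{2}}{3}$ between the leading coefficients for the square and for the Manhattan ball, which leaves a positive constant fraction of $z^{3/2}=\Theta(zs)$ in $M(I)-M(F)$. This is the one place the proof really relies on Proposition~\ref{pro:nest}: without knowing $F$ must be essentially a Manhattan ball, one could not pin down the constant in $M(F)$, and the boundary case would fail.
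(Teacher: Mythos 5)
Your route is genuinely different from the paper's. The paper uses no transport or potential-function argument: it takes a rough rectangle whose long side is at least $10\sqrt{z}$ (choosing $b=\lceil 10\sqrt{z}\rceil$ even when $s'=\Theta(\sqrt{z})$, so the aspect ratio is always large), observes that the West-most and East-most thirds of the bricks are at mutual distance at least $s'/4$ while any nest has span less than $2\sqrt{z}\le s'/5$, and concludes that at least $z/3$ bricks must each be carried distance at least $s'/40$. That argument needs only a crude upper bound on a nest's span (a corollary of Proposition~\ref{pro:nest}) and no information whatsoever about the shape of the final configuration --- which is exactly why the paper never meets your ``delicate regime'': the elongated rectangle is chosen so that the two thirds cannot both be near a set of span $2\sqrt{z}$, with constants to spare. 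Your inequality $T(I,F)\ge M(I)-M(F)$ is correct, and your approach would even handle a square initial field, which the paper's construction deliberately avoids.

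However, the step you yourself flag as load-bearing in the boundary case contains a genuine gap. Proposition~\ref{pro:nest} states that every rough disc is a nest; it does not state, and it is not true, that every nest is a translate of a rough disc. For $z_r<z<z_{r+1}$ the $z-z_r$ surplus cells can be attached to the disc's boundary in many inequivalent ways while keeping the span minimal, so nests of size $z$ are far from unique up to translation, and you cannot pin down $M(F)$ by citing that proposition. The bound you need, $M(F)\le \frac{\sqrt2}{3}z^{3/2}+O(z)$ for \emph{every} nest $F$, is still true, but it requires an argument you did not supply: from Proposition~\ref{pro:nest} the minimal span is $s_z\le\sqrt{2z}+2$, and since $|x_p-x_q|+|y_p-y_q|=\max\{|\Delta(x+y)|,|\Delta(x-y)|\}$, any set of $\ell_1$-diameter $D$ lies in a diamond $\{a\le x+y\le a+D,\ b\le x-y\le b+D\}$ containing only about $D^2/2\approx z$ grid cells; hence $F$ is a Manhattan ball minus $O(\sqrt{z})$ cells, each contributing $O(\sqrt{z})$ to the median cost, and $M(F)$ matches the ball's value up to $O(z)$. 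Without some such containment argument the comparison $\tfrac12>\tfrac{\sqrt2}{3}$ proves nothing, so as written the proof is incomplete precisely in the regime $s=\Theta(\sqrt{z})$. (Minor points: the span of the trimmed rectangle is $L+W-2$ or $L+W-3$, and your definitions give $W\le L+1$ rather than $W\le L$; both are harmless.)
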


\begin{proof}
The proposition is proved by considering an appropriate initial field that requires large time to be transformed to a nest.
We define a {\em rough rectangle} of size $z$ and sides $a \leq b$, such that $z-ab<b$, to be the following set of cells: there is a grid $a \times b$ of cells
(with $a$ rows and $b$ columns), and the remaining $z-ab$ cells are attached to the North-most cells of the $z-ab$ West-most columns. 

\noindent
Case 1. $s' \geq 10\sqrt{z}$.

Take the rough rectangle $R$ with sides $b=s'$, $a=\lfloor z/b\rfloor$ as the initial field. The span of $R$ is $s\in \Theta(s')$.
There exist subsets $A$ and $B$ of $R$ of size at least $z/3$ at distance at least  $s'/4$: $A$ is composed of the $\lceil  z/(3a)\rceil$ West-most columns of the grid and $B$ is composed of the $\lceil  z/(3a)\rceil$ East-most columns of the grid.
Since the span of a nest of size $z$
is smaller than $2\sqrt{z} \leq s'/5$, in order to transform $R$ into a nest, at least $z/3$ bricks have to be moved at distance at least $s'/40$. Hence the time of building a nest from the initial field $R$ is at least $s'z/120 \in \Omega (sz)$.

\noindent
Case 2. $s' < 10\sqrt{z}$.

In this case we have $s'\in \Theta(\sqrt{z})$.
Take the rough rectangle $R$ with sides $b=\lceil10\sqrt{z}\rceil$, $a=\lfloor z/b\rfloor$ as the initial field. The span of $R$ is $s\in  \Theta(\sqrt{z})= \Theta(s')$
and $s \geq 10\sqrt{z}$.
There exist subsets $A$ and $B$ of $R$ of size at least $z/3$ at distance at least  $s/4$. Since the span of a nest of size $z$
is smaller than $2\sqrt{z} \leq s/5$, in order to transform $R$ into a nest, at least $z/3$ bricks have to be moved at distance at least $s/40$. Hence the time of building a nest from the initial field $R$ is at least $sz/120 \in \Omega (sz)$.
\end{proof}

\section{The algorithm}

The robot will move bricks from the original field and build two special components.
One of them will be a rough disc that will be gradually extended.
The second one will be a one-cell component whose only cell is called the \emph{marker}. The robot will periodically get at large distances from the rough disc being built, and the role of the marker will be to indicate to the robot that it got back in the vicinity of the rough disc.
Any component that is different from the   rough disc and from the marker will be called a \emph{free component}.

During the execution of the entire algorithm, the robot will not ensure that the full cells outside of the   rough disc and of the marker form one component --- they may form several components --- but after adding a new brick to the rough disc these components will be always at a bounded distance, i.e., at distance $O(1)$, from the   rough disc that the robot is constructing.

We are now ready to sketch the high-level idea of the algorithm, whose pseudo-code is presented at the end of this section as algorithm \ref{alg:nest}.
First the robot performs some preliminary actions by establishing the marker and the initial   rough disc and by calling procedure \ref{alg:sweeping}, which together result in constructing the first   rough disc $\D$ (of size one), placing the marker next to it and ensuring that no full cells other than the marker are at distance at most $\distPC$ from $\D$.
Then each iteration of the main loop of algorithm \ref{alg:nest} performs three actions.
First, it executes procedure \ref{alg:find-next-brick} that allows the robot to find a brick in a free component $\C$.  
This brick must be carefully chosen. For example, greedily picking the closest available brick would soon result in creating large empty spaces between  components of the field, in which the robot could get lost.
This brick will be later used to extend the   rough disc.
However, this procedure may lead the robot far from the   rough disc and may also disconnect $\C$ into many components.
Disconnecting $\C$ is one of the main tools in our construction. It is done by the robot on purpose to allow it to find its way back to the marker and so that it is possible to recover the connectivity of $\C$ on the way back.
Such a walk back to the marker is the second action performed in the main loop and described as procedure \ref{alg:return-to-marker}.
The third action is done once the robot is back at the marker, and it  is given as procedure \ref{alg:extend-pseudo-square}.
This procedure extends the   rough disc, ensuring the property that there are no full cells at  a prescribed bounded distance from the   rough disc, except the marker.
While restoring this property, the robot may again disconnect some components but all of them are at a bounded distance from the   rough disc and thus the robot will be able to find them easily.
Additionally, it may happen that the cell brought to the   rough disc was the last cell of $\C$.
In such a case, as the last part of procedure \ref{alg:extend-pseudo-square}, the robot places the marker near another component close to the   rough disc, if one exists.
This will be the new free component $\C$ in which the robot will find the next brick in the next iteration of the main loop.
If no such $\C$ exists, then the robot adds the brick from the marker to the rough disc, thus completing the construction of the nest.

Many of the difficulties described above come from our desire to keep the complexity of the algorithm optimal, i.e., $O(sz)$. If complexity were not an issue, the following much simpler algorithm would be sufficient. The robot first builds a horizontal line at the level of a South-most cell of the initial field, by gradually squeezing down the field, keeping it connected at all times. Then it transforms the line into a nest. The squeezing down can be performed by iteratively repeating the following steps. First, the robot makes sure that it is not on the lowest level (if the line is not yet constructed, this can be done by finding a full cell with a full South neighbor). Then the robot goes to some North-West extremity of the field. Then it picks the brick from this cell and drops it one level down, ensuring the connectivity. When the field is squeezed down to a line, the robot will recognize this and easily transform it into a nest. 

This idea potentially requires time $\Theta(z)$ to lower a brick by one level. Since there are $z$ bricks possibly on $\Theta(s)$ levels, the entire time would be $\Theta(sz^2)$ in the worst case,  which is suboptimal. Thus we proceed with the detailed description of the optimal algorithm  \ref{alg:nest} whose high-level idea was described before.

\subsection{Moving bricks out of the way}

One of the challenges in constructing the   rough disc is to have enough room so that, while expanding, it does not merge with the remainder of the field.
This is one of the goals of procedure \ref{alg:sweeping}. Its 
high-level idea is the following.
It ensures an invariant that has to hold whenever the agent goes to retrieve the next brick in order to extend the   rough disc.
This invariant is that there are no full cells, other than the cell $\marker$ that is the marker, within a given bounded distance from the current   rough disc $\D$.
This procedure is called when the robot is at the marker,  in two situations.
The first one is right before the main loop: in this case the marker, the   rough disc (of size one) and its corresponding neighborhood occupy a constant number of cells and hence in this case the robot is able to decide whether a given cell is in $\D\cup\{\marker\}$.
The second situation occurs after each extension of the   rough disc.
In this case, the size of $\D$ may be unbounded but a walk around its border (which can be done with stop, using the marker) allows to determine if there is a full cell $c$ within a bounded distance from $\D$, that does not belong to the   rough disc itself.
Whenever such a full cell $c$ is found, the robot picks the brick from $c$ and searches for an empty cell at distance at least $\distPC$ from the   rough disc.
This searching walk is done in such a way as to ensure the return to the   rough disc. At the end of the procedure, the robot places the marker next to some free component.   
Below is the pseudo-code of  procedure \ref{alg:sweeping}. In this pseudo-code, we use the notion of the robot going \emph{in direction away} from the   rough disc $\D$.
This is the direction which strictly increases the distance between the robot and the   rough disc. In the case where there are two such directions, we use the priority North, East, South, West.

\begin{figure}[htb!]
\begin{center}
\begin{minipage}{1\linewidth}
\begin{algorithm}[H]
\SetAlgoRefName{Sweep}
\SetAlgorithmName{Procedure}{}{}
	\caption{sweeping away bricks that are close to the   rough disc}
	\label{alg:sweeping}
	$\marker:=$ the marker
	
	go to the closest cell of the   rough disc $\D$ \label{ln:sweeping:go-to-square}
	
	perform a full counterclockwise traversal of the border of $\D$, executing the following actions after each step: \label{ln:sweeping:counterclockwise-traversal}
	
	\Indp
	   \For{each full cell $c\notin \D\cup\{\marker\}$ at distance at most $\distPC$ from the robot \label{ln:sweeping:for}}
	   {
	       $c':=$ the cell currently occupied by the robot \label{ln:sweeping:c-prime-def}
	   
	       go to $c$ and pick the brick \label{ln:sweeping:pick-from-c}

	       move in direction away from $\D$ and stop at the first empty cell at distance at least $\distPC$ from $\D$ \label{ln:sweeping:go-to-drop}
	       
	       drop the brick and return to $c'$ \label{ln:sweeping:drop-and-return}
	   }
	
	\Indm
	\If{there exists a free component $\C$ \label{ln:sweeping:if-component}}
	{
	    pick the brick from marker and place it at distance $\distPM$ from the   rough disc and at distance $\distMC$ from $\C$, creating a new marker
	    \label{ln:reposition-marker}
	    
	}

	    go to the marker \label{ln:sweeping:last-trip-to-marker}

\end{algorithm}
\end{minipage}
\end{center}
\end{figure}

\subsection{Finding the next brick}

The high-level idea of procedure \ref{alg:find-next-brick} is the following.
In may happen that the cells that belong to a free component $\C$ and are close to the marker cannot be rearranged in such a way that the robot be able to obtain a brick that it can then use to extend the   rough disc and at the same time keep the connectivity of $\C$ and ensure that $\C$ remains close to the marker.
Thus, the robot has to retrieve the needed brick by following a potentially long walk; we will call it a \emph{search walk} and formally define it in Section~\ref{sec:search-walks}.
The search walk needs to be carefully chosen to ensure that it ends at a location where it is possible to find the desired brick and so that the robot be able to return to the marker.
Moreover, this walk has to be sufficiently  short to guarantee the time $O(sz)$ of the algorithm, i.e., the length of each walk must be $O(s)$.
We ensure the latter as follows: each search walk $\W$ leads alternately in two non-opposite directions, e.g., North and West.
The return is guaranteed by repeatedly performing an action called \emph{switch} while walking along $\W$, which we formally describe in Section~\ref{sec:ensuring-return}.
Intuitively speaking, the switch eliminates the cells adjacent to $\W$ at which the robot may incorrectly turn on its way back to the marker.
The switch potentially disconnects the component but the robot is able to recover the connectivity while backtracking along $\W$.
Finally, in Section~\ref{sec:obtaining-brick}, we describe how the desired brick can be found at the end of $\W$.

\subsubsection{Search walks} \label{sec:search-walks}

Suppose that the robot is currently at a full cell and there is a full cell in front of the robot.
A \emph{left-free} (respectively \emph{right-free}) segment consists of all full cells that will be visited by the robot that moves without changing its direction until one of the following conditions holds:
\begin{enumerate}[label={\normalfont(S\arabic*)},align=left,leftmargin=*]
\item\label{it:c:turn-or-leaf}  the robot arrives at a full cell that has an empty cell in front of it and has an empty cell to the left (respectively right) of it; such a segment is called \emph{terminal},
\item\label{it:c:corner} the robot arrives at the first special cell such that the cell to the left (respectively right) of the robot is full.
\end{enumerate}
Whenever the orientation is not important or clear from the context we will refer to a left-free or right-free segment by saying \emph{segment}.
Note that not every special cell terminates the above sequence of moves.

We now define a \emph{search walk} $\W$ of the robot in an arbitrary component $\C$ (cf. the example in Fig. \ref{fig:search-walk}).
A search walk depends on the initial position of the robot in  $\C$ and on its orientation.
We make two assumptions in the definition: the robot is initially located at a full cell of $\C$ and, if $|\C|>1$, then there is a full cell in front of the robot.
The search walk $\W$ is a concatenation of segments.
The first segment is both left-free and right-free.
If the cell $c$ at the end of this segment is a leaf, then the construction of $\W$ is completed.
Otherwise, note that there is a full cell to the left of the robot located at $c$ or to the right of it.
In the former case, the search walk is called \emph{left-oriented} and in the latter it is \emph{right-oriented}.
Intuitively, a left-oriented search walk prescribes going straight until it is possible to go left, then going straight until it is possible to go right, and so on, alternating directions, until a stop condition is satisfied. A similar intuition concerns right-oriented search walks. 

More formally, if $|\C|>1$, then the search walk $\W$ consists of a single cell. Otherwise, in a right-oriented (respectively left-oriented) search walk, the segments are sequentially added to $\W$, cyclically alternating the following construction steps.
\begin{enumerate}[label={\normalfont(W\arabic*)},align=left,leftmargin=*]
\item\label{it:segment1} The robot traverses a right-free (respectively left-free) segment, adding it to $\W$.
This segment becomes the last segment in $\W$ if it is a terminal.
If the segment is not the last one, then the robot turns right (respectively left).
\item\label{it:segment2} The robot traverses a left-free (respectively right-free) segment, adding it to $\W$.
This segment becomes the last segment in $\W$ if it is a terminal.
If the segment is not the last one, then the robot turns left (respectively right).
\end{enumerate}

\begin{figure}[htb!]
\begin{center}
\includegraphics[scale=0.8]{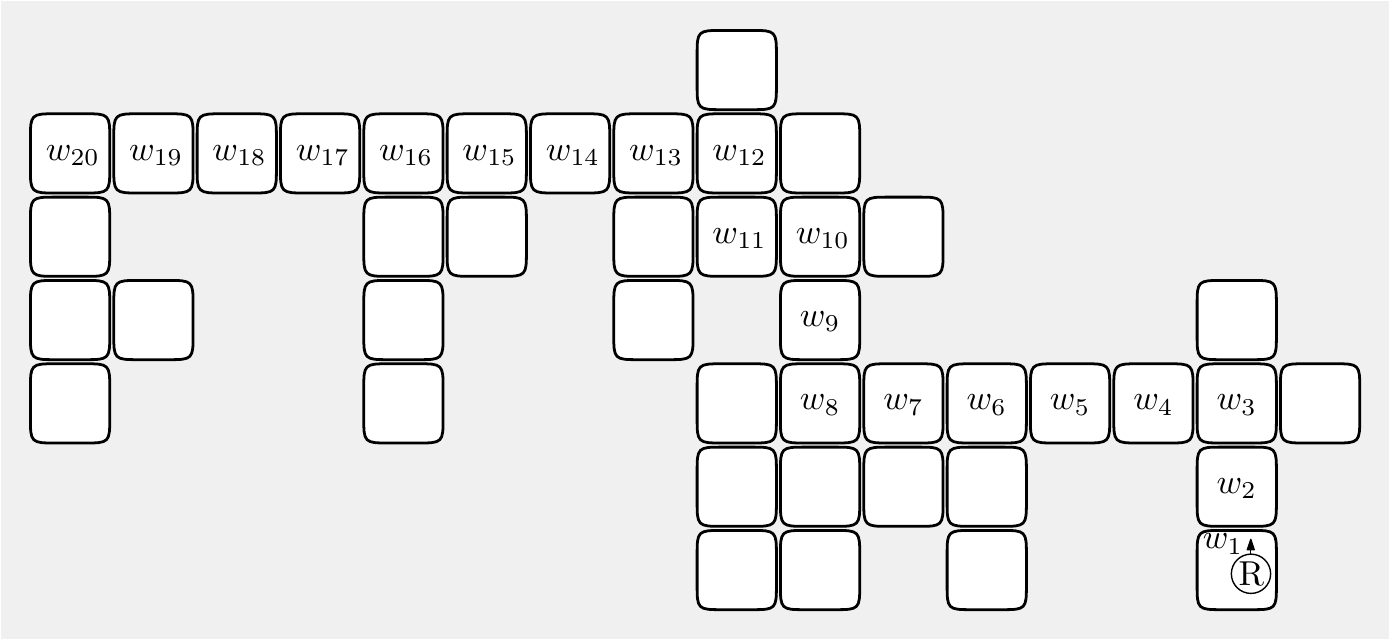}
\caption{An example of a search walk $\W=(w_1,\ldots,w_{20})$ that is constructed by the robot initially located at $w_1$ and facing North. This search walk is left-oriented, and has three left-free segments $S_1=(w_1,w_2,w_3)$, $S_3=(w_8,w_9,w_{10})$, $S_5=(w_{11},w_{12})$ and three right-free segments $S_2=(w_3,\ldots,w_8)$, $S_4=(w_{10},w_{11})$, $S_6=(w_{12},\ldots,w_{20})$.}
\label{fig:search-walk}
\end{center}
\end{figure}

\subsubsection{Ensuring the return from a search walk} \label{sec:ensuring-return}

We start with a high-level idea of the mechanism that will ensure the return from a search walk. 
Whenever the robot follows a search walk $\W$, it may a priori not be able to return to the origin of $\W$.
This is due to the fact that, e.g., if $\W$ is left-oriented, then any segment that is right-free may have an unbounded number of special cells such that each of them is adjacent to a cell that does not belong to $\W$.
Thus, the returning robot is not able to remember, using its bounded memory, which of such cells does not belong to the search walk and should be skipped.
To overcome this difficulty, the robot will make small changes in the field close to the search walk while traversing it for the first time.
These changes may disconnect the component in which the robot is walking, and this may result in creating many new components.
While doing so, we will ensure two properties.
First, thanks to the modifications in the field performed while traversing $\W$, the robot is able to return to the first cell of this search walk.
Second, while backtracking on $\W$, the robot is able to undo earlier changes and recover the connectivity of the component.

Each cell $c$ at which the robot stops to perform the above-mentioned modification will be called a \emph{break point} and is defined as follows.
First, we require that $c$ be an internal cell of a segment $S$, i.e., neither the first nor the last cell of $S$.
Second, if $S$ is left-free (respectively right-free), then when the robot traversing $S$ is at $c$, then there is a full cell $f$ to the right (respectively left) of it.
Clearly, the cell $e$ to the left (respectively right) of the robot is empty.
The following couple of actions performed by the robot located at such a cell $c$ are called a \emph{switch}: if the robot is not carrying a brick, then the robot brings the brick from $f$ and then places it at $e$, and if the robot is carrying a brick, then the robot places it at $e$ and then brings the brick from $f$.
Note that the switch may disconnect the component in which the robot is located thus creating two new components.
One new component is the one in which the robot is located and this is the component that contains the search walk.
The second new component is the one that contains the cell $f'$ adjacent to $f$ and at distance two from $c$, if $f'$ is full. If the cell $f'$ is full
and belongs to a separate component, then this second component containing $f'$ will be called a \emph{switch-component}.
Whenever the robot traversing a segment performs the switch at each internal special cell of the segment, we say that this is a \emph{switch-traversal}
(cf. the example in Fig. \ref{fig:switch-traversal}).
\begin{figure}[htb!]
\begin{center}
\includegraphics[scale=0.8]{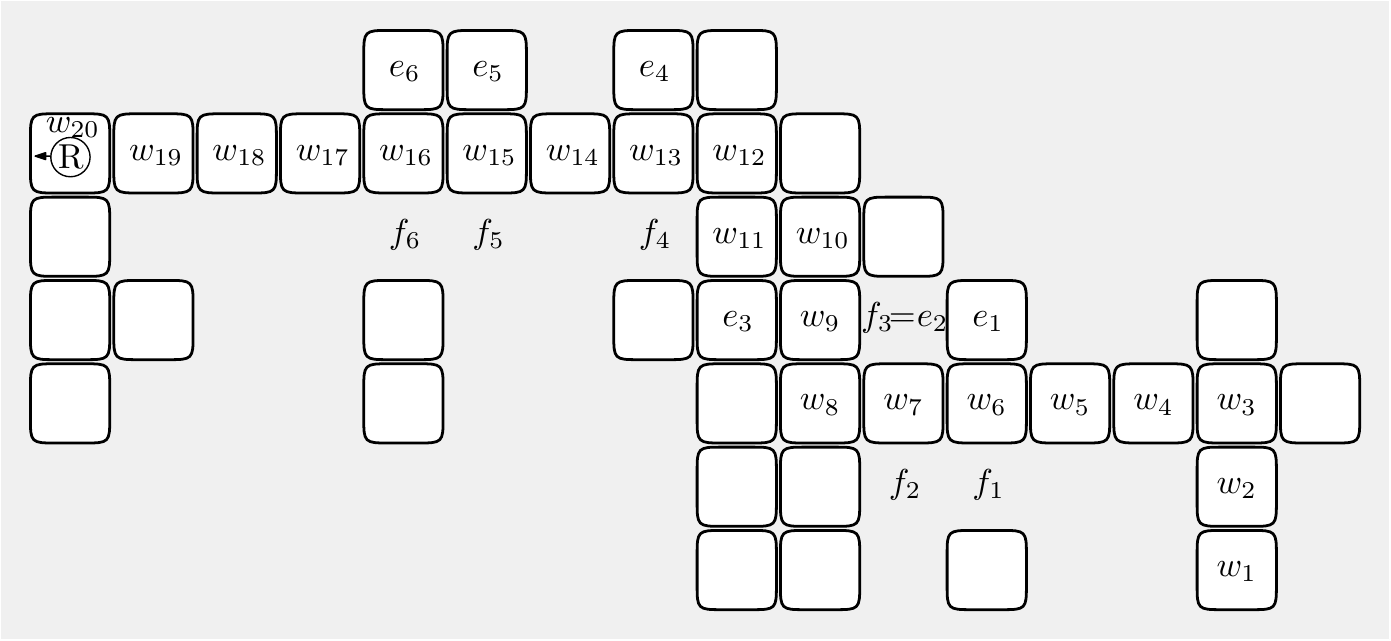}
\caption{The field from Figure~\ref{fig:search-walk} after switch-traversal of the search walk from Figure~\ref{fig:search-walk}. The cells $w_6,w_7,w_9,w_{13},w_{15}$ and $w_{16}$ are the break points at which the robot moves a brick from a cell $f_i$ to $e_i$ for $i\in\{1,\dots,6\}$. Note that a brick is moved from $f_2$ to $e_2$ while traversing the second segment and then the same brick is moved to $e_3$ while traversing the third segment.}
\label{fig:switch-traversal}
\end{center}
\end{figure}

\subsubsection{Obtaining a brick at the end of a search walk} \label{sec:obtaining-brick}

Informally, the purpose of traversing the entire search walk by a robot is to arrive at a location in the current component $\C$ of the robot, where
the robot can start a procedure aimed at obtaining a brick whose removal will not disconnect $\C$.
We will say that such a brick is \emph{free}.
In our algorithm, we check the condition \ref{it:c:turn-or-leaf} to learn if the last segment is terminal.
According to the condition, the terminal segment may end with a leaf, and in such a case the robot is at a cell with a free brick.
If the terminal segment does not end with a leaf, then there need not exist a free brick located in a close neighborhood of the robot.
However, we will prove that it is possible to perform a series of changes to the field that results in creating a configuration of bricks that does contain a free brick.

We now define the behavior of the robot that ended the switch-traversal of the last segment $S$ of a search walk $\W$ and arrived at a cell that is not a leaf.
The following series of moves is called \emph{shifting} (cf. Fig. \ref{fig:obtaining-brick}).
Suppose that the cell to the right (respectively left) of the robot is full. Note that this implies that $S$ is left-free (respectively right-free).
First the robot changes its direction so that a cell of $S$ is in front of it (i.e., the robot turns back).
The following three actions are performed until the stop condition specified in the third action occurs.
First, the robot picks the brick from the currently occupied cell.
Second, the robot moves one step forward --- thus backtracking along $S$.
Third, when the robot is at a special cell, then the shifting is completed, and otherwise the robot places the brick at the cell to the left (respectively right) of it.
Note that when the robot arrives at a special cell, it is carrying a brick and this is the desired free brick.
\begin{figure}[htb!]
\begin{center}
\includegraphics[scale=0.8]{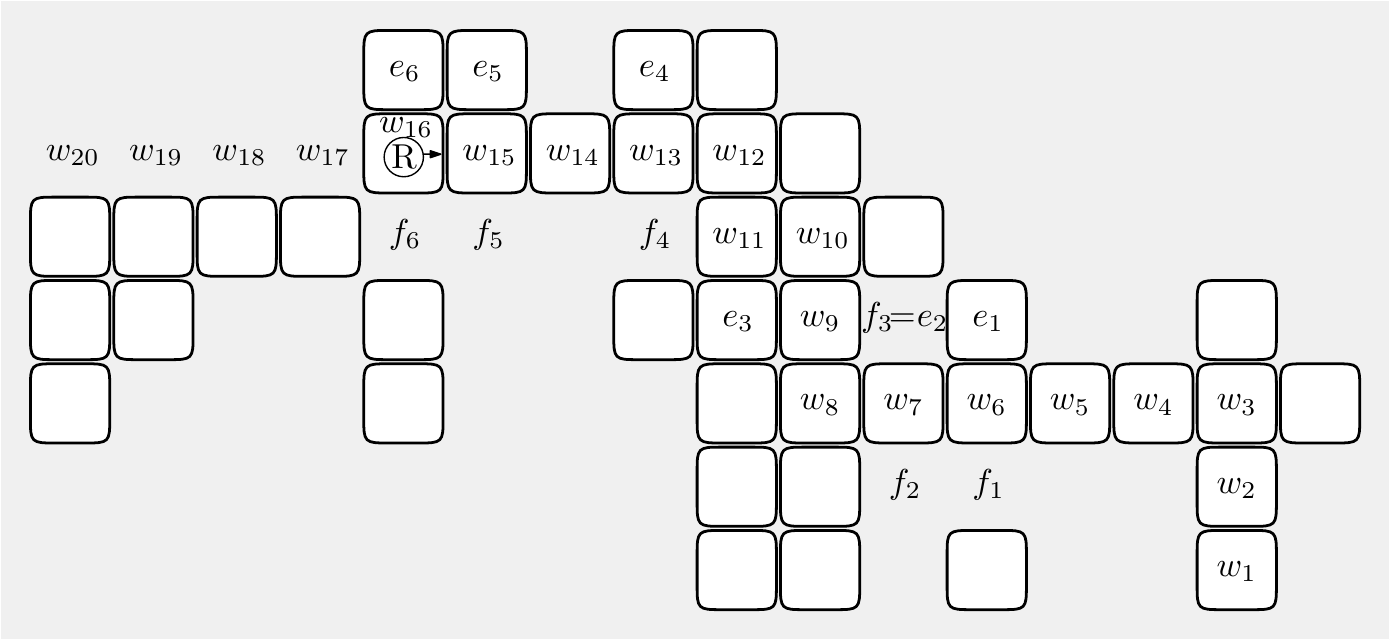}
\caption{The field from Fig.~\ref{fig:switch-traversal} at the end of shifting. The shifting ends with a right-free segment, at the cell $w_{16}$ because it has a full neighbor, the cell $e_6$. There is one fewer brick than in Fig. \ref{fig:switch-traversal} and this is the free brick obtained and carried by the robot.}
\label{fig:obtaining-brick}
\end{center}
\end{figure}

We now give a pseudo-code of procedure \ref{alg:find-next-brick} that obtains this brick.
\begin{figure}[htb!]
\begin{center}
\begin{minipage}{.8\linewidth}
\begin{algorithm}[H]
\SetAlgoRefName{FindNextBrick}
\SetAlgorithmName{Procedure}{}{}
	\caption{finding a free brick}
	\label{alg:find-next-brick}
		
	$\W:=$ the search walk that starts at the cell where the robot is located \label{ln:find-next-brick-W}
	
	go to the nearest cell belonging to a free component \label{ln:find-next-brick-go-to-marker}
	
	perform a switch-traversal of $\W$ \label{ln:find-next-brick-switch-traversal}
	
	\If{the robot is at a leaf \label{ln:find-next-brick-if-start}}
	{
	    pick the brick \label{ln:find-next-brick-if-brick}
	}
	\Else
	{
	    perform shifting \label{ln:find-next-brick-if-shifting}
	}
\end{algorithm}
\end{minipage}
\end{center}
\end{figure}

\subsection{Back to the marker}

Before presenting the high-level idea of  procedure \ref{alg:return-to-marker} that takes the robot carrying a brick back to the marker, we need the following definitions.
If $S$ is a segment, then the \emph{reversal} of $S$, denoted by $\reversal{S}$, is the segment composed of the same cells as $S$ but in the reversed order.
For a search walk $\W$ that is a concatenation of segments $S_1,\ldots,S_l$, define the \emph{reversal} of $\W$, denoted by $\reversal{\W}$, to be the walk that is the concatenation of segments $\reversal{S_l},\reversal{S_{l-1}},\ldots,\reversal{S_1}$, in this order.
Thus, following $\reversal{\W}$ means backtracking along $\W$, and in this section we give a procedure performing it, that reconnects the previous free component on the way.
We also define the orientation of $\reversal{\W}$ as follows.
If the last segment of $\W$ is left-free, then $\reversal{\W}$ is left-oriented, and otherwise $\reversal{\W}$ is right-oriented.
Thus, if $\W$ ended with a left-free (respectively right-free) segment, then $\reversal{\W}$ also starts with a left-free (respectively right-free) segment.

At a high level, the robot will perform a switch-traversal along $\reversal{\W}$, stopping at each break point to reconnect the corresponding switch-component with the component in which the robot is walking.
However, we need to ensure that, at the end, the robot stops at the right point, i.e., at the marker.
In order to ensure this, we define the following condition:
\begin{enumerate}[label={\normalfont(S\arabic*')},align=left,leftmargin=*]
\item\label{it:c:marker} the robot arrives at a cell at distance at most $\distMC$ from the marker.
\end{enumerate}
We define a \emph{return switch-traversal} of $\reversal{\W}$ to be a switch-traversal of $\reversal{\W}$ in which each verification of condition \ref{it:c:turn-or-leaf} is replaced by the verification of condition \ref{it:c:marker}.
Recall that the condition \ref{it:c:turn-or-leaf} is checked in the definition of a switch-traversal to determine the termination of a segment and consequently the termination of the entire search walk.
Intuitively, by replacing condition \ref{it:c:turn-or-leaf} with \ref{it:c:marker} we change the behavior of the robot so that it is looking for the marker while backtracking along $\W$, i.e., going along $\reversal{\W}$.

A high-level sketch of procedure \ref{alg:return-to-marker} is the following.
As indicated earlier, the robot essentially follows $\reversal{\W}$ and, as the return switch-traversal dictates, reconnects the switch components.
However, there is one special case in which the robot should not perform a switch while being at a cell $c'$ of $\reversal{\W}$, although $c'$ satisfies the definition of a break point.
This case occurs if the shifting moved all internal cells of the last segment of $\W$. In this case, 
it is enough for the robot to move to the next cell after $c'$ and start the return switch-traversal of $\reversal{\W}$ from there.
This is feasible because the cell $c$ and its neighbors are at a bounded distance from the robot when the shifting is completed.
Below is the pseudo-code of procedure  \ref{alg:return-to-marker}.

\begin{figure}[htb!]
\begin{center}
\begin{minipage}{1\linewidth}
\begin{algorithm}[H]
\SetAlgoRefName{ReturnToMarker}
\SetAlgorithmName{Procedure}{}{}
	\caption{going back to the marker}
	\label{alg:return-to-marker}
		
	$\W:=$ the search walk traversed in the last call to \ref{alg:find-next-brick} \label{ln:return:W}
	
	{let $S_1,\ldots,S_l$ be the segments in $\W$} \label{ln:return:S}

	\If{the robot is at the first cell of $S_l$ \label{ln:return:if-shifting-first}}
	{
	    turn towards the penultimate cell of $S_{l-1}$ \label{ln:return:turn}
	    
	    move $\min\{2,|S_{l-1}|-1\}$ cells forward \label{ln:return:omit-fake-switch}
	    
	    if $|S_{l-1}|=2$ and $l>2$, then make a turn towards the penultimate cell of $S_{l-2}$ \label{ln:return:second-turn}
	    
	}
	starting at the current location, perform a return switch-traversal of $\reversal{\W}$ \label{ln:return:switch-traversal}
	
	go to the marker \label{ln:return:go-to-marker}

\end{algorithm}
\end{minipage}
\end{center}
\end{figure}

\subsection{Extending the   rough disc}

The aim  of procedure \ref{alg:extend-pseudo-square} is triple: it adds a new brick to the current   rough disc in a specific place, and it calls procedure \ref{alg:sweeping} to extend, if necessary, the empty space around the   rough disc and to ensure that the marker is close to some free component.
Whenever procedure \ref{alg:extend-pseudo-square} is called, the following conditions will be satisfied: the robot is at the marker and it is carrying a brick.
Below is the pseudo-code of procedure \ref{alg:extend-pseudo-square}.

\begin{figure}[htb!]
\begin{center}
\begin{minipage}{.9\linewidth}
\begin{algorithm}[H]
\SetAlgoRefName{ExtendRoughDisc}
\SetAlgorithmName{Procedure}{}{}
	\caption{adding one brick to the rough disc $\D$}
	\label{alg:extend-pseudo-square}

	place the brick at the unique cell $e$ such that $\D \cup \{e\}$ is a rough disc  \label{ln:pseudo-square:extend}
	    
	call procedure \ref{alg:sweeping} \label{ln:pseudo-square:sweeping}

\end{algorithm}
\end{minipage}
\end{center}
\end{figure}

Now the pseudo-code of the entire algorithm can be succinctly formulated as follows.

\begin{figure}[htb!]
\begin{center}
\begin{minipage}{.9\linewidth}
\begin{algorithm}[H]
\SetAlgoRefName{Nest}
	\caption{building a nest from any connected field}
	\label{alg:nest}

	\If{the span of the field is at most $2$}
	{
	    exit \Comment{the field is already a nest}
	}
	
	the cell occupied by the robot becomes the marker

	a full cell at distance 2 from the marker becomes the initial rough disc
	
	call procedure \ref{alg:sweeping} \label{ln:nest:init-sweeping}
	
	\While{there exists a free component $\C$}
	{	    
	    call procedure \ref{alg:find-next-brick} \label{ln:nest:find-next-brick}
	    
	    call procedure \ref{alg:return-to-marker} \label{ln:nest:return-to-marker}
	    
	    call procedure \ref{alg:extend-pseudo-square} \label{ln:nest:extend-pseudo-square} \Comment{moves the marker if necessary}
	}

	pick the brick and place it at the unique cell $e$ of the rough disc $\D$ such that $\D \cup \{e\}$ is a rough disc \label{ln:nest:final-brick}

\end{algorithm}
\end{minipage}
\end{center}
\end{figure}

The following is the main result of this paper.

\begin{theorem} \label{thm:main}
Algorithm \ref{alg:nest} builds a nest starting from any connected field of size $z$ and span $s$ in time $O(sz)$.
This time is worst-case optimal.
\end{theorem}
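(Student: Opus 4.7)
The optimality half of the statement is already handled by Proposition \ref{pro:lb}, so my plan is to focus on correctness together with the $O(sz)$ upper bound. I would proceed by identifying a loop invariant for the main while-loop of \ref{alg:nest} and verifying that the invariant is established by the preamble (initial placement of $\marker$, initial one-cell rough disc, and the first call to \ref{alg:sweeping}) and preserved by one iteration of the loop body. The invariant I would maintain at the top of each iteration is: (I1) $\D$ is a rough disc; (I2) $\marker$ is a single-cell component at distance $\distPM$ from $\D$; (I3) every full cell within distance $\distPC$ of $\D$, other than $\marker$, is empty; (I4) every free component lies within distance $\distMC$ of $\marker$; (I5) the robot stands at $\marker$ carrying no brick. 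These together imply, via Proposition \ref{pro:nest}, that once no free component remains the final line of \ref{alg:nest} turns $\D$ into a rough disc of size $z$, hence a nest.

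Next I would verify each procedure in turn. For \ref{alg:sweeping}, the counterclockwise border walk scans all cells within distance $\distPC$ of $\D$, and by the choice of constants $\distPCmin<\distPC$ each swept brick is dropped at distance at least $\distPC$ without re-entering the forbidden zone. For \ref{alg:find-next-brick}, I would argue that the search walk $\W$ is well-defined and finite because each of its segments is monotone in one cardinal direction and must terminate inside the bounded current field via \ref{it:c:turn-or-leaf} or \ref{it:c:corner}; that the switch-traversal is well-defined because at every internal special cell of a left-free (resp. right-free) segment the right (resp. left) neighbor is full by the definition of a break point; and that the shifting at a non-leaf endpoint delivers a brick whose removal does not disconnect the current component, since the shifted bricks leave a connecting full path along the last segment. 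For \ref{alg:return-to-marker}, the key observation is that a return switch-traversal reverses the earlier switches in opposite order, thereby reattaching each switch-component and restoring the pre-\ref{alg:find-next-brick} connectivity; the prelude in lines \ref{ln:return:if-shifting-first}--\ref{ln:return:second-turn} correctly skips the cells consumed by shifting so that the first break point processed corresponds to a genuine switch. Finally, \ref{alg:extend-pseudo-square} appends one cell to $\D$ leaving it a rough disc by the very definition of ``rough disc'', then calls \ref{alg:sweeping} to re-establish (I3) and, if the current free component was exhausted, to reposition $\marker$ next to a new free component, re-establishing (I2) and (I4).

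The main obstacle I anticipate is the joint verification that (a) using only the bounded neighborhood oracle of radius $r=8$ from Section 2, the robot can recognize every local configuration it must distinguish (break point versus non-break special cell; end of segment versus interior; condition \ref{it:c:marker} during the return; the prelude cases of \ref{alg:return-to-marker}), and (b) no switch or sweep ever creates a component farther than a constant from $\W$ or from $\D$, so that every later reconnection and pickup is still local. I would handle (a) by case analysis on the $(2r+1)\times(2r+1)$ neighborhood together with the orientation stored in the state, and (b) by observing that a switch moves a brick at most two cells away and a sweep drops a brick at distance exactly $\distPC$ from $\D$, so every component outside $\D\cup\{\marker\}$ remains within $O(1)$ of its pre-modification cells.

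For the complexity bound, I would show that each main-loop iteration costs $O(s)$. The length of a search walk is bounded by the diameter of the bounding box containing the current field, which remains $O(s)$ because the field stays within a constant-width moat around the original bounding box; the switch-traversal, shifting, and return switch-traversal are each linear in $|\W|$, and the final walk back to $\marker$ is $O(\distMC)$. The call to \ref{alg:sweeping} after each extension traverses the border of a rough disc of size at most $z$, which has length $O(\sqrt{z})=O(s)$ since any field of $z$ cells satisfies $s=\Omega(\sqrt{z})$; the extra work per detected close-by brick is $O(1)$, amortizing trivially. Since the main loop runs $O(z)$ iterations (one brick is added to $\D$ per iteration), the total running time is $O(sz)$, matching the lower bound of Proposition \ref{pro:lb}.
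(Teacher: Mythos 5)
Your overall architecture is the same as the paper's: an induction on the main loop maintaining what the paper calls a \emph{strongly structured} field (your (I1)--(I5)), per-procedure correctness lemmas for \ref{alg:sweeping}, \ref{alg:find-next-brick}, \ref{alg:return-to-marker} and \ref{alg:extend-pseudo-square}, an $O(s)$-per-iteration cost bound over $O(z)$ iterations, and the lower bound delegated to Proposition~\ref{pro:lb}. Two of your concrete steps would fail as written, however. First, your invariant (I4) --- ``every free component lies within distance $\distMC$ of $\marker$'' --- is too strong to be preserved: both sweeping and the switch/shift operations can leave \emph{several} free components, and the algorithm only guarantees that all of them stay within distance $\distPC$ of the rough disc while the marker sits at distance $\distMC$ from \emph{some one} of them (this is exactly conditions \ref{it:structured-MPC} and \ref{it:structured-gap} together with the ``no lost components'' requirement in the paper). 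With your stronger (I4), the preservation step of the induction breaks, for instance right after \ref{alg:sweeping} scatters pieces of the remaining field around the border of $\D$.

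Second, your cost accounting for \ref{alg:sweeping} --- ``the extra work per detected close-by brick is $O(1)$, amortizing trivially'' and ``a sweep drops a brick at distance exactly $\distPC$ from $\D$'' --- is not correct: line~\ref{ln:sweeping:go-to-drop} walks away from $\D$ until the first \emph{empty} cell at distance at least $\distPC$, and this walk can have length $\Theta(s)$ when a free component extends radially outward from the disc. Since shifting can deposit $\Theta(s)$ bricks inside the gap in a single iteration (the last segment of a search walk may have $\Theta(s)$ internal cells, each of whose bricks is moved sideways and may land at distance as small as $\distPCmin$ from $\D$), the naive bound is $\Theta(s)\cdot\Theta(s)$ per iteration, yielding only $O(s^2z)$ overall. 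The missing idea is the classification in the paper's Lemma~\ref{lem:sweeping}: in each iteration, all but $O(1)$ of the sweep-walks have length at most $2$ (because a brick moved by shifting sits adjacent to the cell of the last segment that was just emptied), and the $O(1)$ remaining long sweep-walks each cost $O(\sqrt{z}+s)=O(s)$, giving $O(s)$ total per call. Without an argument of this kind the claimed $O(sz)$ bound is not established.
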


\section{Analysis of the algorithm}

This section is devoted to the proof of Theorem \ref{thm:main}.

\subsection{Search walks} \label{sec:analysis:search-walk}

The first three observations provide simple properties of search walks that will be used in the sequel.

\begin{observation} \label{obs:after-finding-brick}
Suppose that $\W$ is the search walk from procedure \ref{alg:find-next-brick}.
After performing the switch-traversal in line~\ref{ln:find-next-brick-switch-traversal} of the procedure, the robot is at the end of $\W$.
\qed
\end{observation}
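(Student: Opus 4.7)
To prove Observation~\ref{obs:after-finding-brick}, I would induct on the number of segments $l$ of $\W = S_1, \dots, S_l$.

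For the base case $l = 1$, the sole segment $S_1$ is by definition both left-free and right-free, so every internal cell has both its left and right neighbors empty. Hence $S_1$ has no break points, no switches are performed during the switch-traversal, and the robot simply walks along $S_1$ to its last cell, which is by definition the end of $\W$.

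For the inductive step ($i \ge 2$), assume the robot has completed the switch-traversal of $S_1, \dots, S_{i-1}$ and is at the first cell of $S_i$ with the correct orientation. I need to show (a)~that the switch-traversal of $S_i$ follows the cells of $S_i$ in the original order and ends at its last cell, and (b)~that the turn at the end of $S_i$ and the first steps of $S_{i+1}$ also proceed as in the original $\W$ (the latter is vacuous when $i = l$). For (a), note that each switch at a break point $c$ moves a brick between the two neighbors of $c$ that are perpendicular to $S_i$'s direction. For any cell $c'$ on $S_i$ strictly after $c$, all four neighbors of $c'$ are distinct from the two modified cells, so the local termination checks (S1) and (S2) at $c'$ give the same answer as in the original field, and the last cell of $S_i$ is recognized correctly.

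Point (b) is the main obstacle. Interaction with $S_{i+1}$ can only occur when the last break point of $S_i$ is its penultimate cell, in which case one of the modified cells is adjacent to the second cell $d$ of $S_{i+1}$. The key observation is that consecutive segments of $\W$ always have opposite free-side orientations (if $S_i$ is left-free then $S_{i+1}$ is right-free, and vice versa), so the turn aligns the modified side of $d$ with the \emph{irrelevant} side of $d$ for $S_{i+1}$'s termination conditions---the left side when $S_{i+1}$ is right-free, or the right side when $S_{i+1}$ is left-free. Since the back of $d$ (the last cell of $S_i$, which is the first cell of $S_{i+1}$) is always full, a short case analysis on the two remaining neighbors of $d$ shows that neither (S1) nor (S2) changes truth value at $d$. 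Hence $S_{i+1}$ begins and proceeds exactly as in the original $\W$. Cells of $S_{i+1}$ beyond $d$ are at strictly greater distance from the modifications and are unaffected. By induction, the robot ends the switch-traversal at the last cell of $S_l$, the end of $\W$.
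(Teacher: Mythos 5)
The paper offers no proof of this observation at all --- it is stated with an immediate \qed\ as one of three ``simple properties of search walks'' --- so there is no argument of the authors' to compare yours against; what you have written is a genuine formalization of something the paper treats as self-evident. Your argument is essentially correct. The base case is right (internal cells of the first, doubly-free segment have both side neighbors empty, so no switches occur), point (a) is right (a switch at a break point $c$ modifies only the two cells perpendicular to the direction of travel at $c$, and these are not among the four neighbors of any subsequent cell of the same segment, so conditions \ref{it:c:turn-or-leaf} and \ref{it:c:corner} are evaluated on an unmodified neighborhood), and your analysis of the junction is the right key point: since consecutive segments have opposite free sides, the cell filled by the last switch of $S_i$ sits on the non-checked side of the second cell $d$ of $S_{i+1}$, and because the back neighbor of $d$ is always full, condition \ref{it:c:corner} at $d$ depends only on the (unmodified) checked-side neighbor, so its truth value is preserved. (It is worth noting, as you implicitly do, that the observation only concerns the robot's trajectory; the switch of $S_i$ may create a \emph{new} break point at $d$ for $S_{i+1}$, triggering an extra, well-defined switch there, but this does not alter the path.)

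The one step you should shore up is the claim that ``cells of $S_{i+1}$ beyond $d$ are at strictly greater distance from the modifications and are unaffected.'' As stated this only rules out interference with the \emph{next} segment; a priori a switch performed in $S_1$ could fill a cell adjacent to some cell of $S_5$, and your induction hypothesis (``the robot is at the first cell of $S_i$ with the correct orientation'') does not by itself exclude this. What rescues the argument is the global shape of a search walk: from the second segment on it alternates between two fixed non-opposite (perpendicular) directions, so it is monotone in both coordinates, and a short computation shows that the two cells modified at a break point of one segment can be adjacent to a cell of a later segment only in the penultimate-cell junction configuration you already analyze. You should state and use this monotonicity explicitly; without it the ``strictly greater distance'' assertion is not justified for segments beyond $S_{i+1}$.
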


\begin{observation} \label{obs:side-reversal}
A left-free (respectively right-free) segment becomes right-free (respectively left-free) as a result of a switch-traversal.
\qed
\end{observation}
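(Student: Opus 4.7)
The plan is to unfold the definitions of left-free and right-free segments in parallel, observe how the switch-traversal exactly flips the pattern of neighbors at every internal special cell, and verify that the same cell sequence becomes a right-free segment when traversed in the same direction.

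First I would characterize the internal cells of a left-free segment $S=(c_1,\ldots,c_n)$. Since the robot passes each internal $c_i$ without triggering \ref{it:c:turn-or-leaf} or \ref{it:c:corner}, its front neighbor is full (it is the next cell $c_{i+1}$) and either $c_i$ is non-special or its left neighbor is empty. If $c_i$ is special, then $c_{i-1}$ and $c_{i+1}$ are opposite neighbors that do not share a corner, so specialness must be witnessed by a full neighbor perpendicular to the direction of motion; combined with the empty left, this forces the right neighbor of $c_i$ to be full. Hence every internal special cell exhibits the pattern ``front full, back full, left empty, right full.'' I would then observe that the switch at such a $c_i$ moves the brick from its right to its left neighbor, producing ``front full, back full, left full, right empty,'' and $c_i$ remains special, now witnessed by the back--left corner. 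Because the segment is a straight line, the left and right neighbors of $c_i$ are not direct neighbors of any other $c_j$ in $S$, so switches at different internal special cells act on disjoint sets of cells; in particular every internal non-special cell keeps its neighborhood unchanged and therefore remains non-special.

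Finally I would read off the right-free stopping conditions on the modified field. At every internal cell the front is still full, so the right-free analog of \ref{it:c:turn-or-leaf} (empty front and empty right) fails; at each internal special cell the right is now empty, so the right-free analog of \ref{it:c:corner} (special with full right) fails; at internal non-special cells \ref{it:c:corner} trivially fails. So the right-free traversal from $c_1$ in the original direction proceeds through all internal cells all the way to $c_n$. To see that it terminates exactly at $c_n$, I would note that $c_n$'s direct neighborhood is untouched by the switch-traversal and then case-analyze the original termination of $S$: an \ref{it:c:turn-or-leaf}-terminal $c_n$ has empty front and empty left, and distinguishing whether its right neighbor is empty or full yields either a right-free \ref{it:c:turn-or-leaf} trigger or a right-free \ref{it:c:corner} trigger (in the latter case $c_n$ becomes special via the back--right corner); an \ref{it:c:corner}-terminal $c_n$ is special with full left, and a similar distinction on its front and right neighbors yields one of the right-free stopping conditions. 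The symmetric claim (right-free becoming left-free) follows by mirroring the entire argument. The main obstacle I anticipate is the \ref{it:c:corner}-terminal case at $c_n$: ruling out the pathological configuration where neither right-free condition triggers may require invoking the additional constraint that a non-terminal segment is followed by a turn onto a full cell, which restricts the relevant neighbor of $c_n$.
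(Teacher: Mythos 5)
The paper states this observation without proof (the \qed{} follows the statement directly), treating it as immediate from the definitions, so there is no argument of the authors' to compare against; what matters is whether your unfolding is sound. The core of it is, and it is exactly the content the paper actually uses: an internal cell of a left-free segment has full front and back neighbors and an empty left neighbor (otherwise \ref{it:c:corner} would already have fired), hence it is special precisely when its right neighbor is full; the switch turns ``right full, left empty'' into ``right empty, left full''; and since the segment is a straight line, the cells moved by switches at distinct break points are pairwise distinct and are not neighbors of any other cell of the segment, so after the switch-traversal no internal cell can trigger the right-free version of \ref{it:c:turn-or-leaf} or \ref{it:c:corner}. The ``obstacle'' you flag at the last cell is a genuine defect of the literal forward-direction statement, not of your proof: a segment ending by \ref{it:c:corner} whose last cell has a full front neighbor and an empty right neighbor is not excluded by the search-walk construction, and your proposed fix does not close it --- after the left turn the stopping conditions are only tested upon arrival at the \emph{second} cell of the next segment, so nothing constrains the old front or right neighbors of $c_n$. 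This does no harm to the paper, however: Observation~\ref{obs:side-reversal} is invoked only in Lemma~\ref{lem:return-segment}, and only to conclude that the backtracking robot does not turn prematurely at an internal cell of $\reversal{S}$; termination at the far endpoint is argued there separately from the structure of the preceding segment $p(S)$. So your argument proves everything that is actually needed, and your suspicion about the endpoint case as literally stated is justified.
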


\begin{observation} \label{obs:end-of-W}
Let $\W$ be a search walk whose last segment $S$ is left-free (respectively right-free).
Upon completion of a switch-traversal of $\W$, if the robot is not at a leaf, then the cell to the right (respectively left) of the robot is full and the cells in front of it and to the left (respectively right) of it are empty.
\qed
\end{observation}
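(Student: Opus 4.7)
The plan is to establish the statement for the case when the last segment $S$ is left-free, since the right-free case follows by a symmetric argument. I would organize the proof around three ingredients: (a) locating the robot at the end of the switch-traversal, (b) reading off the state of the front and left neighbors from the terminating condition of a terminal segment, and (c) using the definition of a leaf to force the right neighbor to be full.

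First, by Observation \ref{obs:after-finding-brick}, at the end of the switch-traversal the robot is located at the last cell $c$ of $\W$, which is the last cell of $S$. Because $S$ is the last segment of a search walk, it is terminal, i.e., its construction ended by condition \ref{it:c:turn-or-leaf}. For a left-free segment this condition asserts that at the moment the robot arrived at $c$ the cells in front of $c$ and to the left of $c$ were empty. I would stress that this is an assertion about the current state of the field at that moment, and hence already incorporates every switch performed earlier during the traversal of $\W$.

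Next, I would observe that $c$ is the last cell of $S$, so it is not an internal cell of any segment and therefore not a break point; in particular, no switch is performed at $c$ itself. Consequently, from the moment the robot reaches $c$ until the switch-traversal ends, the states of $c$, of its front neighbor, and of its left neighbor do not change; thus the front and left neighbors of $c$ are empty upon completion of the switch-traversal. To identify the right neighbor under the hypothesis that $c$ is not a leaf, I would note that the back neighbor of $c$, namely the penultimate cell of $S$, is full, so that by the definition of a leaf a non-leaf cell must have at least two full neighbors. Since the front and left neighbors are empty, the right neighbor must be full.

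The main point requiring care is the remark made in step (b): one must verify that condition \ref{it:c:turn-or-leaf} is checked against the state of the field at the moment of arrival, not the original configuration, since switches executed at earlier break points of $\W$ could in principle modify cells close to the end of the walk. Once this is spelled out, the observation follows directly from the definitions of terminal segment, break point, and leaf, without any further case analysis.
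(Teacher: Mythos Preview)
Your proposal is correct and is precisely the argument the paper leaves implicit: the observation is stated with an immediate \qed{} and no proof. Your unpacking of the terminal-segment condition \ref{it:c:turn-or-leaf}, the fact that no switch is performed at the endpoint (it is not an internal cell and hence not a break point), and the leaf count forcing the right neighbor to be full is exactly the reasoning that justifies the claim.
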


We will need some additional notation.
Let $\W$ be a search walk in a component $\C$ and $b$ be the last break point of $\W$.
All cells of $\W$ starting at $b$ are called the \emph{prefix of} $\reversal{\W}$.
Note that, at each point of the execution of procedure \ref{alg:find-next-brick}, some parts of $\C$ become the switch-components.
The other component, namely the one that overlaps with $\W$ will be called the \emph{$\W$-component}.
(Note that we specify that it overlaps with $\W$ and not contains $\W$ since, after the shifting, some cells of $\W$ that are in the prefix of $\reversal{\W}$ are empty and thus are not a part of the $\W$-component.)
Upon completion of shifting, each component that contains a brick placed during shifting will be called a \emph{shift-component}.

\begin{lemma} \label{lem:shifting}
Upon completion of shifting performed in procedure \ref{alg:find-next-brick}, we have the following properties.
\begin{enumerate}[label={\normalfont(\roman*)},align=left,leftmargin=*]
\item\label{it:shifting:unique}
The shift-component is unique.
\item\label{it:shifting:shift-component}
Let $c'$ be the cell adjacent to the cell occupied by the robot and also adjacent to the cell on which the last brick has been placed during shifting.
If $c'$ is full, then it belongs to the shift-component and it belongs to $\W$.
If $c'$ is empty, then it is adjacent to a cell of the shift-component.
\item\label{it:shifting:penultimate}
The robot is located at the penultimate special cell of $\W$ and is carrying a brick.
\end{enumerate}
\end{lemma}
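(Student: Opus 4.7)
The plan is to prove the three parts in the order (iii), (i), (ii), since a concrete description of where the robot stops will guide the subsequent structural analysis.

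For (iii), my first step would be to show that a switch-traversal of $\W$ preserves the set of special cells lying on $\W$. Each switch at a break point modifies only the two side-neighbours of that break point, swapping their full/empty status, and the break point itself retains a corner-forming pair of full neighbours (e.g.\ back--left for a left-free segment after the switch). A non-break-point internal cell of a segment has full neighbours only along the segment (back and front), which do not share a corner and whose states are not affected by any switch; hence such a cell remains non-special. Segment junctions and the endpoint of $\W$ keep their original (special) status. Given this, the shifting procedure walks the robot one cell backwards along the last segment of $\W$, halting the first time it lands on a special cell, and therefore stops at the penultimate special cell of $\W$. Since the final iteration of shifting picks a brick from the previous cell before moving, the robot is carrying a brick when it stops.

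Next, for (i), let $w_j$ be the stopping cell from (iii) and write the last segment of $\W$ as $w_1,\ldots,w_k$. Shifting places bricks on a fixed side of the cells $w_{k-1},w_{k-2},\ldots,w_{j+1}$, at cells $s_i$ adjacent to $w_i$ on the prescribed side. Consecutive $w_i$ are grid-adjacent along the (straight) segment direction, so consecutive $s_i$ are grid-adjacent as well; hence $\{s_{j+1},\ldots,s_{k-1}\}$ is contained in a single grid-connected line and therefore in a single component, so the shift-component is unique.

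For (ii), I would identify $c'$ as a grid-cell adjacent to both the robot at $w_j$ and to the last placed brick $s_{j+1}$. The two candidates are $w_{j+1}$ (empty, because shifting emptied it, and adjacent to $s_{j+1}$) and $s_j$ (the cell on the placement side of $w_j$). I would split the analysis of $s_j$ into two cases: if $w_j$ is a break point of the last segment then the switch at $w_j$ emptied $s_j$, so $s_j$ is empty and adjacent to $s_{j+1}$ as required; otherwise $w_j$ is a segment junction or the first cell of the last segment, and by tracing how the previous segment of $\W$ enters $w_j$ one shows that $s_j$ lies on $\W$ whenever it is full, which gives the full-case conclusion (in the shift-component because adjacent to $s_{j+1}$).

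The hard part will be keeping the several directional conventions straight: the walking direction of $\W$, the left-free vs.\ right-free classification of each segment, the robot's turn-around during shifting, and how these combine to determine the placement side, the modified side-neighbours, and the location of $c'$. Once these are pinned down, each sub-claim reduces to a local case analysis in a small neighborhood of the stopping cell, which I expect to be routine.
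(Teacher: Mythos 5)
Your proposal is correct and follows essentially the same route as the paper's proof: both analyze the backward walk along the last segment, identify the stopping cell as the first cell that was special on $\W$ (hence the penultimate special cell of $\W$, reached carrying a brick), observe that the bricks dropped during shifting occupy consecutive cells on a fixed side of the segment and hence lie in one component, and prove (ii) by the same dichotomy --- either the stopping cell is a break point whose placement-side neighbour was emptied by the earlier switch, or it is the first cell of the last segment and that neighbour is the penultimate cell of the preceding segment of $\W$. The differences (reordering the parts, the explicit observation that a switch-traversal preserves the special cells on $\W$, and deriving uniqueness in (i) directly from adjacency of consecutive dropped bricks rather than from the paper's two-component decomposition of $\C$) are presentational rather than substantive.
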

\begin{proof}
Consider a shifting that occurs after a switch-traversal of $\W$.
Let $S$ be the last segment of $\W$.
The segment $S$ is either left-free or right-free, by the definition of a search walk.
We consider the former case (cf. the example in Fig. \ref{fig:shifting}); the latter one is analogous and will be skipped.
\begin{figure}[htb!]
\begin{center}
\includegraphics[scale=0.7]{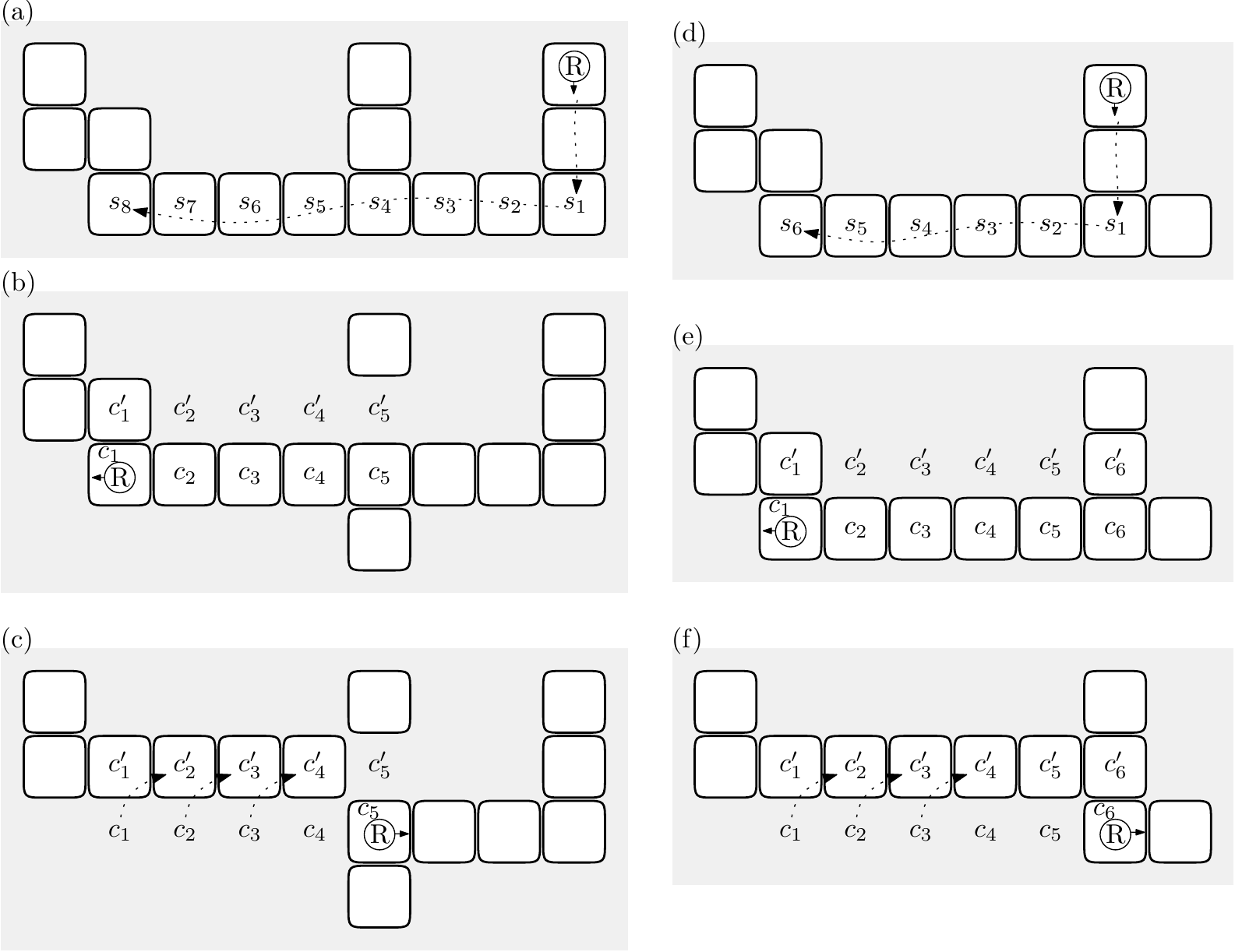}
\caption{(a)-(c) illustrates the case when the cell $c'$ in Lemma~\ref{lem:shifting}\ref{it:shifting:shift-component} is empty: (a) shows the robot traversing the penultimate segment after which, in (b), a switch-traversal of the last segment $(s_1,\ldots,s_8)$ is completed, and (c) depicts the field at the end of shifting. Note that in this case $c'=c_l=c_5'$.
(d)-(f) illustrates the case when $c'$ is full: (d) and (e) again depict the switch-traversal of the last two segments, and (f) shows the field at the end of shifting. In this case $c'=c_l=c_6'$ and this cell belongs to the penultimate segment of the search walk. During the traversal of the last segment no switch occurred.
As we will prove, the two latter properties are always satisfied together.
}
\label{fig:shifting}
\end{center}
\end{figure}

By Observation~\ref{obs:after-finding-brick} and the formulation of procedure \ref{alg:find-next-brick}, when the shifting starts the robot is at the last cell of $S$.
Denote this cell by $c_1$.
Let $c_2,\ldots,c_l$ be the cells of $S$ such that $c_i$ is adjacent to $c_{i-1}$ for each $i\in\{2,\ldots,l\}$, $c_l$ is a special cell in $\W$ and $c_2,\ldots,c_{l-1}$ are not special cells in $\W$.
Note that the latter sequence may be empty if two last cells of $S$ are special cells.
Denote by $c_{i}'$ the cell that is to the left of the robot when it is at $c_i$ during shifting.

According to the definition of shifting, the following occurs: for each $i\in\{1,\ldots,l-1\}$, the brick from $c_i$ is moved to the cell $c_{i+1}'$.
We argue that shifting is feasible, i.e., that the cell $c_i$, $i\in\{1,\ldots,l-1\}$, is full when the robot picks the brick from it and that the cell $c_{i+1}'$, $i\in\{1,\ldots,l-2\}$, is empty when the robot then drops the brick at it, for $i\in\{1,\ldots,l-1\}$.
The former is due to the fact that $c_i$ belongs to a segment which by definition is a sequence of full cells.
The latter holds because $c_2,\ldots,c_{l-1}$ are not special cells, i.e., their neighbors outside of $S$, in particular the cells $c_2',\ldots,c_{l-1}'$, are empty.

We now analyze what types of components are created during shifting from the component $\C$ in which it started.
Consider the components $\C_1,\ldots,\C_t$ obtained from $\C$ by making the cells $c_1,\ldots,c_{l-1}$ empty.
These are all cells from which the robot picked the bricks during shifting.
First observe that there are exactly two components as above, i.e., $t=2$.
The first one, say $\C_1$, contains the cell $c_1'$ because, according to the `if' statement in line~\ref{ln:find-next-brick-if-start} of procedure \ref{alg:find-next-brick}, $c_1$ is not a leaf and hence, due to Observation~\ref{obs:end-of-W}, the cell $c_1'$ is full.
The second component, let it be $\C_2$, contains $c_l$, which is full because it belongs to $\W$ and has not been emptied during shifting.
To finish the argument that $t=2$, note that by definition of shifting, the cells $c_2,\ldots,c_{l-1}$ are not special cells and thus have no full neighbors outside of $S$, and by Observation~\ref{obs:end-of-W}, the cell $c_1$ has no other full neighbors in $\C$ except $c_2$ and $c_1'$.
Observe that making the cells $c_2',\ldots,c_{l-1}'$ full during shifting glues these cells to $\C_1$, resulting in a single component.
This is the unique shift-component. Thus property \ref{it:shifting:unique} is proved.

The cell that is to the right of the robot upon completion of shifting is $c_l'$.
Note that $c'=c_l'$ and this cell is adjacent to $c_{l-1}'$.
Hence, if $c_l'$ is empty, then \ref{it:shifting:shift-component} holds.
Suppose that $c_l'$ is full.
The shift-component $\C_1$ and the $\W$-component $\C_2$ are the same: $c_l'$ is adjacent to the full cell $c_{l-1}'$ of $\C_1$ and is also adjacent to the full cell $c_l$ that belongs to $\C_2$.
Since $S$ is left-free, by the definition of switch-traversal we obtain that the brick located at $c_l'$ has not been dropped at $c_l'$ as a result of a switch.
Moreover, it cannot be a neighbor of an internal cell of $S$.
Since this is not the last cell of $S$ because $l>1$, we have that $c_l$ is the first cell of $S$.
Hence $c_l'$ is the penultimate segment of $\W$.
This proves property \ref{it:shifting:shift-component}.

Finally, the brick picked from $c_{l-1}$ is carried by the robot at the end of shifting.
At the end of shifting, after picking the brick, the robot moves to $c_l$ which is the penultimate special cell of $\W$ because the cells $c_2,\ldots,c_{l-1}$ are not special cells by definition.
This proves property \ref{it:shifting:penultimate} and completes the proof of the lemma.
\end{proof}

We say that the   rough disc has a \emph{gap of width $k$} if each free component is at distance $k+1$ from the   rough disc.
We say that a field is \emph{structured} if it satisfies the following conditions:
\begin{enumerate}[label={\normalfont(F\arabic*)},align=left,leftmargin=*]
 \item\label{it:structured-MPC} the marker is at distance $\distPM$ from the   rough disc and at distance $\distMC$ from some free component,
 \item\label{it:structured-gap} the   rough disc has a gap of width $\distPC$.
\end{enumerate}

The next lemma characterizes the situation upon completion of procedure  \ref{alg:find-next-brick}.

\begin{lemma} \label{lem:find-brick-correct}
Suppose that the field is structured at the beginning of an execution of procedure \ref{alg:find-next-brick}.
Let $\W$ be the search walk traversed by the robot in a free component $\C$ during this execution.
At the end of this execution we have the following properties:
\begin{enumerate}[label={\normalfont(\roman*)},align=left,leftmargin=*]
\item\label{it:find-robot} the robot is located in the prefix $\reversal{\W}$ and carries a brick.
\item\label{it:find-components} each brick that belongs to $\C$ at the beginning of the execution is either in a switch-component, in the shift-component or in the $\W$-component,
\item\label{it:find-gap} if $\W$ starts with a cell at distance $\distPC$ from the   rough disc, then the   rough disc has a gap of width at most $\distPC$ and at least $\distPCmin$.
\end{enumerate}
\end{lemma}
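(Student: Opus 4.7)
The plan is to verify each of the three properties by walking through the actions of procedure~\ref{alg:find-next-brick}: moving to the free component, performing the switch-traversal of $\W$, and then either picking a brick at a leaf or executing shifting.

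For property~\ref{it:find-robot}, I would split into the two termination cases. If the robot ends at a leaf, Observation~\ref{obs:after-finding-brick} places it at the last cell of $\W$, which is trivially in the prefix of $\reversal{\W}$ (the prefix by definition extends from the last break point to the end of $\W$); after picking, it carries a brick. If shifting occurs, Lemma~\ref{lem:shifting}\ref{it:shifting:penultimate} places the robot at the penultimate special cell of $\W$ carrying a brick, so I would only need to check that this cell lies at or past the last break point of $\W$. This follows because shifting halts at the first special cell encountered while backtracking along the last segment, and any break point of that segment is internal and would be encountered first.

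For property~\ref{it:find-components}, I would use a conservation-of-bricks argument. Switches and shifting both move bricks only between cells that are neighbours of cells of $\W$, so no brick leaves the spatial footprint of $\C$. After the switch-traversal, $\C$ decomposes into the $\W$-component together with the switch-components generated at individual break points. If shifting is then performed, Lemma~\ref{lem:shifting}\ref{it:shifting:unique} produces exactly one additional component, the shift-component, while the remainder of the $\W$-component is unaffected. The brick carried by the robot may be charged to the $\W$-component from which it was just picked. Hence every brick originally in $\C$ is accounted for in a switch-component, the shift-component, or the $\W$-component.

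For property~\ref{it:find-gap}, I would carry out a distance analysis relative to $\D$. The structured-field hypothesis and the choice of the nearest cell in line~\ref{ln:find-next-brick-go-to-marker} place the first cell of $\W$ at distance $\distPC$ from $\D$, while every cell of $\C$ is at distance at least $\distPC$ from $\D$. For the upper bound on the gap: the first cell of $\W$ is never internal to a segment, hence never a break point, and is not touched by shifting (which acts only inside the last segment); it therefore remains full throughout the procedure and certifies a gap of width at most $\distPC$. For the lower bound: every cell of $\W$ is at distance at least $\distPC$ from $\D$, and both a switch and shifting create new full cells at distance $2$ from a cell of $\W$; hence every newly created full cell is at distance at least $\distPC - 2 = \distPCmin$ from $\D$. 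The main obstacle I expect is precisely this geometric bookkeeping: one must verify carefully that no side effect of the switch-traversal or of shifting produces a full cell within distance $\distPCmin$ of $\D$. A secondary technicality, in~\ref{it:find-robot}, is to reconcile the definitions of ``special cell'' and ``break point'' so that the robot's final position in the shifting case indeed lies in the prefix of $\reversal{\W}$.
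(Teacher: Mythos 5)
Your proposal is correct and follows essentially the same route as the paper: the same case split (leaf vs.\ shifting) for properties \ref{it:find-robot} and \ref{it:find-components}, resting on Observation~\ref{obs:after-finding-brick} and Lemma~\ref{lem:shifting}, and the same distance bookkeeping for \ref{it:find-gap} (the unmoved first cell of $\W$ gives the upper bound on the gap, and the bounded displacement of bricks moved by switches and by shifting gives the lower bound of $\distPCmin$). The only cosmetic difference is that the paper accounts for the gap decrease per brick origin (at most $1$ for a switch, at most $2$ for shifting) while you measure distances to cells of $\W$; both yield the same bound.
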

\begin{proof}
Consider the robot at the end of the execution of procedure \ref{alg:find-next-brick}.
We consider two cases depending on the execution of the `if' instruction in line \ref{ln:find-next-brick-if-start} of procedure \ref{alg:find-next-brick}.
In the first case, suppose that the instruction in line~\ref{ln:find-next-brick-if-brick} is executed.
The robot is at the last cell $c$ of $\W$ and carries a brick that it picked from $c$.
This proves~\ref{it:find-robot} in this case.
Since $c$ was a leaf upon arrival of the robot at $c$, picking $c$ did not disconnect the $\W$-component, which implies that the shift-component equals the $\W$-component and consequently proves \ref{it:find-components} of the lemma.

In the second case, suppose that instruction in line~\ref{ln:find-next-brick-if-shifting} of procedure \ref{alg:find-next-brick} is executed, i.e., the robot performed shifting.
Then, statement \ref{it:find-robot} follows from Lemma~\ref{lem:shifting}\ref{it:shifting:penultimate}, since, by definition, the prefix $\reversal{\W}$  contains  all cells between the last special cell and the penultimate special cell of $\W$, inclusively.
Again by Lemma~\ref{lem:shifting}\ref{it:shifting:unique}, the shift-component is unique which implies statement \ref{it:find-components} in this case.

We now prove \ref{it:find-gap}.
Since the field is structured at the beginning of the execution of procedure \ref{alg:find-next-brick}, it has a gap of width $\distPC$.
The first cell of $\W$ does not change its position and since its distance from the   rough disc is $\distPC$, we have that the gap is not larger than $\distPC$ at the end of the execution.
Now we analyze how much the gap can decrease.
During the execution of procedure \ref{alg:find-next-brick}, the robot performs two types of changes in the field.
The first one is a switch, which places each brick next to a full cell of the component.
Thus, this can decrease the gap by at most one.
The second one is the shifting, which moves each brick to a cell at distance $2$ from its origin, which may decrease the gap by at most $2$.
The bricks moved during a switch do not belong to $\W$ and the bricks moved during shifting belong to it. This proves \ref{it:find-gap} and concludes the proof of the lemma.
\end{proof}

\subsection{Reversals of search walks} \label{sec:analysis:reversal}

If a component is at distance larger than $\distPC$ from the   rough disc, then it is called a \emph{lost component}.
Recall the definition of a switch, in which the robot located at a cell $c$ brings a brick from a full cell $f$ and places it at an empty cell $e$.
The cell $e$ will be called a \emph{switch-cell}.
If, during a return switch-traversal, the robot is backtracking on a segment $S$, which was added to the walk using condition \ref{it:segment1} or \ref{it:segment2}, and it checks, upon arrival at a special cell $c$ of $S$, whether the cell to the left (respectively right) of $c$ is full, then we say that the robot is \emph{looking for left} (respectively \emph{right}) \emph{turn} during this traversal of $S$.
Suppose that the robot is located at a cell $c$ of a segment $S$ during backtracking.
We say that the robot is \emph{oriented} on $\reversal{S}$ if it is directed towards the beginning of $S$.
When the robot backtracks on $S$, then all cells that belong to $S$ and are located between $c$ (inclusively) and the beginning of $S$ (exclusively) are called the \emph{leftover of} $S$.
In other words, these cells are the ones to be yet visited by the robot backtracking on $S$.

In the three following lemmas we prove that procedure \ref{alg:return-to-marker} can be correctly executed by the robot.

\begin{lemma} \label{lem:skipping-fake-switch}
The robot can correctly execute lines \ref{ln:return:W}-\ref{ln:return:second-turn} of procedure \ref{alg:return-to-marker}.
\end{lemma}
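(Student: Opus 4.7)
The plan is to show that each action demanded by lines \ref{ln:return:W}--\ref{ln:return:second-turn} reduces to a test that a finite automaton can perform using its local view of radius $8$. Lines \ref{ln:return:W} and \ref{ln:return:S} are purely declarative: they name the search walk $\W$ and its segment decomposition $S_1,\ldots,S_l$, which live in the automaton's state rather than being stored explicitly. The three substantive operations I will analyse are (i)~detecting whether the robot sits at the first cell of $S_l$, (ii)~turning toward the penultimate cell of $S_{l-1}$, and (iii)~moving the prescribed number of steps and deciding on the optional second turn.

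For (i) I will invoke Lemma~\ref{lem:shifting}. Its part~\ref{it:shifting:penultimate} places the robot, after the shifting branch of procedure~\ref{alg:find-next-brick}, at the penultimate special cell of $\W$ while carrying a brick, and its part~\ref{it:shifting:shift-component} supplies the local criterion: the cell $c'$ adjacent both to the robot and to the last brick placed during shifting is full exactly when the robot sits at the first cell of $S_l$, and in that case $c'$ itself lies on $S_{l-1}$. In the alternative leaf branch of procedure~\ref{alg:find-next-brick}, the robot occupies a just-emptied leaf whose unique full neighbour is a cell of $\W$; that neighbour either continues $S_l$ in a straight line (so $|S_l|\ge 2$) or lies perpendicular to the robot's incoming direction (so $|S_l|=1$ and the robot is at the first cell of $S_l$). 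Both alternatives are visible within the local view.

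For (ii) and (iii), the cell $c'$ (or its leaf-case analogue) pins down one endpoint of $S_{l-1}$, and since a search walk can only turn at corners between perpendicular segments, the direction of $S_{l-1}$ is uniquely determined by which neighbour of $c'$ is full. After turning, the robot inspects the two cells directly in front: if both are full, the step count in line~\ref{ln:return:omit-fake-switch} is $2$ (because $|S_{l-1}|\ge 3$), otherwise it is $1$ (and $|S_{l-1}|=2$). Line~\ref{ln:return:second-turn} fires exactly when this one-cell move has left the robot at the first cell of $S_{l-1}$ and a further segment $S_{l-2}$ exists; both conditions reduce to the same corner test as in~(i), applied at the new location.

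The chief obstacle I anticipate is verifying that these local predicates remain unambiguous despite the extensive modifications that the switch-traversal and the shifting have just made to the field around $S_l$. Those modifications introduce new full cells next to $S_l$ and new empty cells inside it, so I must rule out configurations that could masquerade, within the radius-$8$ window, as a different branch of the if-tests. Lemma~\ref{lem:shifting}\ref{it:shifting:shift-component} already packages this structural guarantee at the end of shifting, and the leaf branch is immediate; what remains is a finite case analysis against the definitions of segments, switches and shifting, which I expect to be tedious but routine.
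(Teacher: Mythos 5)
Your treatment of lines \ref{ln:return:W}, \ref{ln:return:S} and of the test in line \ref{ln:return:if-shifting-first} matches the paper: the first two are purely notational, and the `if' condition is decided exactly as you say, via the cell $c'$ of Lemma~\ref{lem:shifting}\ref{it:shifting:shift-component}. (Your extra case analysis of the leaf branch is beside the point, since in that branch the robot is at the \emph{last} cell of $\W$ and the `if' branch is simply not the interesting case, but this is harmless.)

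The genuine gap is in your handling of the step count in line \ref{ln:return:omit-fake-switch} and of the second turn in line \ref{ln:return:second-turn}. You propose purely \emph{local} tests: look two cells ahead to decide whether $|S_{l-1}|=2$, and apply ``the same corner test'' to decide whether $S_{l-2}$ exists and where it points. Neither test is sound. If $|S_{l-1}|=2$, the penultimate cell of $S_{l-1}$ is its first cell, i.e.\ the special cell at which $S_{l-2}$ ended by condition \ref{it:c:corner}; the cell lying one step further in the direction the robot now faces is \emph{not} on the search walk, yet nothing prevents it from being full --- it sits beside an endpoint of a segment, and switches are performed only at \emph{internal} cells of segments, so the switch-traversal never cleared it. In that configuration your two-cells-ahead test reports $|S_{l-1}|\ge 3$, the robot steps off the search walk, and the backtracking of Lemma~\ref{lem:return-backtracking} breaks down. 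Likewise, ``which neighbour of $c'$ is full'' does not pin down the direction of $S_{l-1}$ or the existence of $S_{l-2}$, because the corner cell may have full neighbours unrelated to the walk. The paper closes this gap not by a geometric test but by the automaton's \emph{state}: while traversing $\W$ in procedure \ref{alg:find-next-brick} the robot always retains the information about the two most recently traversed segments, in particular whether $|S_{l-1}|=2$ or $|S_{l-1}|>2$, whether $l>2$, and the direction of the last turn (the second turn is simply the opposite of the first, by the definition of a search walk). Your concluding remark that the remaining case analysis is ``tedious but routine'' is therefore optimistic: the local predicates you rely on are genuinely ambiguous, and the memory-based argument is the missing idea.
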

\begin{proof}
Note that lines \ref{ln:return:W} and \ref{ln:return:S} result in no actions of the robot and only introduce notation for the description of the remaining steps of the procedure.
The condition in the `if' statement in line~\ref{ln:return:if-shifting-first} can be checked using Lemma~\ref{lem:shifting}\ref{it:shifting:shift-component}: the robot tests whether the cell $c'$ in the lemma is full or not and it is full if and only if the robot is at the first cell of $S_l$.
Now we analyze the instructions within the `if' statement.
In line~\ref{ln:return:turn}, the penultimate cell of $S_{l-1}$ is the above cell $c'$ and hence the robot can perform the correct turn.
The value of $|S_{l-1}|$, may be large and thus unknown to the robot.
However, for calculating $\min\{2,|S_{l-1}|-1\}$ it is enough to know whether $|S_{l-1}|=2$ or $|S_{l-1}|>2$.
This can be remembered by the robot while performing the traversal of a search walk in procedure \ref{alg:find-next-brick} (although the robot does not know whether the current segment is penultimate, it may always remember the information about the two recently traversed segments).
To execute the instruction in line~\ref{ln:return:second-turn}, the robot can remember whether $l>2$, and it turns in the opposite direction to the turn just made in line~\ref{ln:return:turn}, which follows from the definition of a search walk.
\end{proof}

\begin{lemma} \label{lem:return-segment}
Suppose that the robot is oriented on $\reversal{S}$, where $S$ is a left-free (respectively right-free) segment of some search walk.
If:
\begin{enumerate}[label={\normalfont(\roman*)},align=left,leftmargin=*]
\item\label{it:return-segment-a1} the robot previously performed a switch-traversal along $S$,
\item\label{it:return-segment-a2} the robot is looking for a left (respectively right) turn while traversing $\reversal{S}$,
\end{enumerate}
then, during the return switch-traversal, the robot arrives at the end of $\reversal{S}$ and turns left (respectively right).
\end{lemma}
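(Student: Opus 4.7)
I treat the left-free case; the right-free case is symmetric. Write $S=(v_1,\ldots,v_k)$ in its forward direction $d$, and denote the left and right rotations of $d$ by $d^L$ and $d^R$. Since $\reversal{S}=(v_k,\ldots,v_1)$ is traversed in direction $-d$, the robot's current left along $\reversal{S}$ points in direction $(-d)^L = d^R$, that is, to its original forward-right. My plan is to characterize the special cells on $S$ after the forward switch-traversal, to check that the analog of (S2) does not fire at any of them before $v_1$, and to verify that it does fire at $v_1$ and that the prescribed left turn aligns the robot with $\reversal{S'}$, where $S'$ is the predecessor of $S$ in $\W$.

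\emph{Special cells along $S$.} Because $S$ is left-free, every internal $v_i$ has its $d^L$-neighbor empty (else (S2) would have stopped the forward traversal at $v_i$) and its $d$- and $-d$-neighbors full (they are $v_{i-1}$ and $v_{i+1}$); hence $v_i$ is special iff its $d^R$-neighbor is full, i.e., iff $v_i$ is a break point. A single switch at such a break point moves a brick from its $d^R$-neighbor to its $d^L$-neighbor; a short local inspection shows that (i) $v_i$ remains special, (ii) its $d^R$-neighbor is now empty, and (iii) the switch does not make $v_{i-1}$ or $v_{i+1}$ special, because the newly placed brick at the $d^L$-neighbor of $v_i$ is not adjacent to either. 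Hence after the switch-traversal the special cells lying on $S$ are exactly $v_1$, $v_k$, and the former break points, which is consistent with Observation~\ref{obs:side-reversal}.

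\emph{Reaching and turning at $v_1$.} Along $\reversal{S}$, during the return switch-traversal the robot checks its current left only at special cells; at each former break point that neighbor was just emptied, so the analog of (S2) does not trigger and the robot continues until $v_1$. Because the lemma is non-vacuous only when $S$ has a predecessor $S'$ in $\W$, the alternation rules imply that $S'$ is right-free and that the forward transition $S' \to S$ is a right turn; writing $e$ for the forward direction of $S'$, we have $d = e^R$ and $-e = d^R$. By (S2) applied to $S'$, $v_1$ is special; the penultimate cell of $S'$ lies in direction $-e = d^R$ from $v_1$, is full, and is unaffected by the switches on $S$ (it is outside $S$). Since the switches along $S$ only add to the full neighborhood of $v_1$ (at most the switch at $v_2$, which fills the cell at $v_1 + d$), the cell $v_1$ is still special. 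Hence at $v_1$ the robot's current left is full, (S2) fires, and turning left rotates the robot from $-d$ to $(-d)^L = d^R = -e$, which is precisely the forward direction of $\reversal{S'}$.

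The main obstacle is the first step: one must carefully distinguish the four local configurations along $S$ (the two endpoints, the break points, and the other interior cells) and verify that the switches neither create new special cells along $S$ nor leave a ``fake'' full $d^R$-neighbor at a former break point. The remainder is mostly directional bookkeeping, once the identities $d = e^R$ and $-e = d^R$ — which encode that a forward right turn between $S'$ and $S$ becomes a backward left turn — are in place.
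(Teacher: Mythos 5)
Your proposal is correct and follows essentially the same route as the paper: characterize the special cells of $S$ and the state of their side-neighbors after the forward switch-traversal (you re-derive what the paper cites as Observation~\ref{obs:side-reversal}), conclude that no internal cell of $\reversal{S}$ triggers condition \ref{it:c:corner} on the way back, and then verify that the first cell of $S$ is still special with a full cell on the robot's current left, so the turn fires and aligns the robot with $\reversal{S'}$. One detail worth noting: you identify the cell triggering the turn as the penultimate cell of the preceding segment (in direction $d^R=-e$ from $v_1$), which is the geometrically accurate choice, whereas the paper attributes it to the second cell of $S$ (which actually lies behind the returning robot); since both cells are full, both arguments reach the same correct conclusion, and your only slips are cosmetic (e.g., the switch at $v_2$ fills $v_2+d^L$, not $v_1+d$).
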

\begin{proof}
Assume that the robot is looking for a left turn.
The other case is analogous and we omit it.
Thus, $S$ is left-free prior to the switch-traversal performed in \ref{it:return-segment-a1}.
By Observation~\ref{obs:side-reversal}, $\reversal{S}$ is also left-free after the switch-traversal.
Let $S'$ be the maximal sequence of cells that the robot traversed during the return switch-traversal while looking for a left turn, before it turned left.
Since $\reversal{S}$ is left-free, this implies that the robot reaches the end of $\reversal{S}$, i.e., $S'$ contains the endpoint of $\reversal{S}$.
Let $c'$ be the cell at which the robot finishes the traversal of $S'$.
Let $c$ be the first cell of $S$.
We need to show that $c=c'$, i.e., that the robot does make a turn when located at $c$.
By the definition of a search walk, the first cell of $S$ is a special cell.
By the definition of a switch-traversal, the robot does not perform a switch at the end nor at the beginning of a segment.
Thus, in particular, no switch occurred when the robot was located at $c$ during the traversal of $S$ while performing the search walk $\W$.
This means that a neighbor of $c$ that is not in $S$ is full during the traversal of $S$ in $\W$ if and only if this neighbor is full during the traversal of $S'$.
Thus, $c$ is a special cell when the robot arrives at $c$ while traversing $S'$.
Moreover, this special cell ends the segment $p(S)$ that precedes $S$ in $\W$.
Since $p(S)$ is not a terminal segment, we have that it ended by satisfying condition \ref{it:c:corner}.
Since $S$ is left-free, the segment $p(S)$ is right-free by the definition of a search walk.
This implies that when the robot arrived at $c$ during the search walk $\W$, it had a full cell to its right.
This full cell belongs to $S$ and therefore, when the robot arrives at $c$ while traversing $S'$, it has a full cell to its left.
Since the robot is looking for a left turn, $c$ becomes the last cell of $S'$ proving that $c=c'$, and moreover, the robot turns left at $c$, as required by the lemma.
\end{proof}

The following lemma shows that, while executing procedure \ref{alg:return-to-marker}, the robot does not loose its way during the backtrack on a previously traversed search walk.

\begin{lemma} \label{lem:return-backtracking}
Let $\W$ be a search walk.
Suppose that the robot traversed $\W$ by executing procedure \ref{alg:find-next-brick} and then executed procedure \ref{alg:return-to-marker}.
These two executions ensure that the robot traverses $\reversal{\W}$ after the traversal of $\W$.
\end{lemma}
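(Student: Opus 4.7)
The plan is to proceed by cases on how procedure \ref{alg:find-next-brick} terminated, show in each case that procedure \ref{alg:return-to-marker} correctly positions the robot at the appropriate cell of $\reversal{\W}$ with the correct orientation, and then chain the reversals of individual segments by a descending induction on the segment index using Lemma \ref{lem:return-segment}.

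If \ref{alg:find-next-brick} exited through the leaf branch (line \ref{ln:find-next-brick-if-brick}), the robot stands at the last cell of $\W$ oriented opposite to its last forward step; the \texttt{if} block in lines \ref{ln:return:if-shifting-first}--\ref{ln:return:second-turn} is skipped (as justified in Lemma \ref{lem:skipping-fake-switch}), so the return switch-traversal begins at the first cell of $\reversal{\W}$. If \ref{alg:find-next-brick} performed shifting, then by Lemma \ref{lem:shifting}\ref{it:shifting:penultimate} the robot stands at the penultimate special cell of $\W$. When this cell is strictly internal to $S_l$, the leftover of $S_l$ is intact and the return switch-traversal begins from the robot's current position inside $\reversal{S_l}$. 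When instead this cell coincides with the first cell of $S_l$, all internal cells of $S_l$ are empty and the \texttt{if} block repositions the robot two cells into $S_{l-1}$, or one cell plus an additional turn into $S_{l-2}$ when $|S_{l-1}|=2$ and $l>2$.

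The main obstacle is justifying this repositioning. The first cell of $S_l$ (which coincides with the last cell of $S_{l-1}$) is a special cell whose sideways neighbor is still full (either from the last brick placed during shifting, or from an earlier forward switch), so a naive return switch-traversal would wrongly treat it as a break point. A local analysis based on Lemma \ref{lem:shifting}\ref{it:shifting:shift-component} must show that the nearest cell at which a switch was actually performed during the forward traversal of $\W$ lies exactly two cells back along $S_{l-1}$ when $|S_{l-1}|>2$, which matches the $\min\{2,|S_{l-1}|-1\}$ forward move; when $|S_{l-1}|=2$, the segment $S_{l-1}$ has no internal cells and the same spurious break-point issue recurs at the last cell of $S_{l-2}$, requiring the additional turn prescribed in line \ref{ln:return:second-turn}.

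Once the starting cell and orientation are established, I finish by descending induction on the segment index $i$. Assume the robot enters $\reversal{S_i}$ at its first cell with the orientation prescribed by $\reversal{\W}$. Lemma \ref{lem:return-segment} applies: hypothesis \ref{it:return-segment-a1} holds because \ref{alg:find-next-brick} switch-traversed $S_i$, and hypothesis \ref{it:return-segment-a2} holds because the alternation of left-free and right-free segments in $\W$, together with the definition of the orientation of $\reversal{\W}$, make the robot look for the correct turn. The lemma then guarantees that the robot traverses $\reversal{S_i}$ entirely and makes the correct turn at its end, landing at the first cell of $\reversal{S_{i-1}}$ oriented along it. The base case treats the possibly partial $\reversal{S_l}$ by the same argument applied to the leftover rather than the full segment. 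Iterating down to $i=1$ shows that the robot's post-$\W$ trajectory coincides with $\reversal{\W}$, which establishes the lemma.
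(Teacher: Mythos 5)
Your proposal is correct and follows essentially the same route as the paper: position the robot correctly at the start of $\reversal{\W}$ by case analysis on whether shifting occurred and whether it emptied all internal cells of $S_l$ (invoking Lemma~\ref{lem:skipping-fake-switch} and Lemma~\ref{lem:shifting} to justify the repositioning in lines \ref{ln:return:if-shifting-first}--\ref{ln:return:second-turn}, with the sub-cases $|S_{l-1}|=2$ and $|S_{l-1}|>2$), and then induct over the segments using Lemma~\ref{lem:return-segment} to show each $\reversal{S_i}$ is traversed with the correct terminating turn. The paper phrases the induction on the index of the return segments $S_i'$ and proves $S_i'=\reversal{S_{l-i+1}}$ together with the orientation claim, but this is the same argument as your descending induction.
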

\begin{proof}
Denote by $S_1,\ldots,S_l$ the segments whose concatenation, in this order, gives the search walk $\W$.
By definition, a return switch-traversal is a walk that traverses a sequence of segments denoted by $S_1',\ldots,S_{l'}'$.
We argue by induction on $i\in\{1,\ldots,l'\}$ that:
\begin{enumerate}[label={\normalfont(\alph*)},align=left,leftmargin=*]
 \item\label{induction:reversal} $S_i'=\reversal{S_{l-i+1}}$, and
 \item\label{induction:turn} upon arrival at the end of $S_i'$, the robot becomes oriented on $S_{l-i}$.
\end{enumerate}
Before giving the inductive proof we note that property \ref{induction:reversal} implies that the return switch traversal is a concatenation of segments $\reversal{S_l},\ldots,\reversal{S_1}$ and thus it is the walk $\reversal{\W}$ as required by the lemma.
We note that property \ref{induction:turn} is used for technical reasons to conduct the inductive argument.

For the base case of $i=1$ we need to argue that the robot traverses the segment $\reversal{S_l}$ and makes the turn after which the traversal of $\reversal{S_{l-1}}$ will start.
By Lemma~\ref{lem:find-brick-correct}\ref{it:find-robot}, the robot is at a cell $c'$ in the segment $\reversal{S_l}$ at the end of execution of procedure \ref{alg:find-next-brick}.
Thus, the robot is in $\reversal{S_l}$ when procedure \ref{alg:return-to-marker} starts.
The segment $S_l$ has been appended to $\W$ in construction step \ref{it:segment1} or \ref{it:segment2} as a left-free or a right-free segment.
We assume that $S_l$ is left-free. The proof for a right-free segment $S_l$ is analogous and will be omitted.
If shifting did not occur during the execution of procedure \ref{alg:find-next-brick} then, according to line~\ref{ln:find-next-brick-if-brick}, the robot is at the end of the search walk $\W$, i.e., at the end of $S_l$.
If shifting did occur in line~\ref{ln:find-next-brick-if-shifting} of procedure \ref{alg:find-next-brick}, then $c'$ is not the last cell of $S_l$.

If $c'$ is the first cell of $S_l$ (which is the last cell of $\reversal{S_l}$), then by Lemma~\ref{lem:skipping-fake-switch}, the robot can correctly verify this in line~\ref{ln:return:if-shifting-first} of procedure \ref{alg:return-to-marker}.
Moreover, this implies that \ref{induction:reversal} holds for $i=1$.
Then, the robot performs the turn in line~\ref{ln:return:turn} which proves \ref{induction:turn}.

Now suppose that $c'$ is not the first cell of $S_l$.
The assumption \ref{it:return-segment-a1} of Lemma~\ref{lem:return-segment} is satisfied, where we take $S=S_l$ in the lemma.
By the definition of return switch-traversal, the fact that $\W$ ended with a left-free segment (recall that $S_l$ is left-free) implies that, when the robot starts its walk in line~\ref{ln:return:switch-traversal} of procedure \ref{alg:return-to-marker}, it follows a left-oriented walk.
The traversal of $S_1'$ is thus determined by the first construction step in \ref{it:segment1} for this left-oriented walk.
Thus, the robot traverses a left-free segment as the first segment $S_1'$, i.e., it is looking for a left turn.
This implies that the assumption \ref{it:return-segment-a2} of Lemma~\ref{lem:return-segment} is satisfied.
Then, \ref{induction:reversal} and \ref{induction:turn} follow from Lemma~\ref{lem:return-segment}.
This completes the proof of the base case.

For the proof of the inductive step suppose that assertions \ref{induction:reversal} and \ref{induction:turn} hold for some $i\in\{1,\ldots,l-1\}$ and we prove them for $i+1$. By the inductive assumption, the robot is at the end of $\reversal{S_{l-i+1}}$  when procedure \ref{alg:return-to-marker} starts.
We consider three cases.

Case 1. $i+1=2$, $S_{l-1}$ is of length $2$, and the robot is at the beginning of $S_l$ upon completion of shifting performed in line~\ref{ln:find-next-brick-if-shifting} of procedure \ref{alg:find-next-brick}.

By Lemma~\ref{lem:skipping-fake-switch}, the condition in the `if' statement in line~\ref{ln:return:if-shifting-first} of procedure \ref{alg:return-to-marker} is correctly verified by the robot, and by inductive assumption, it is oriented on $\reversal{S_{l-1}}$.
Since the length of $S_{l-1}$ is $2$, the robot makes one step forward (see line \ref{ln:return:omit-fake-switch}) and makes a turn (see line \ref{ln:return:second-turn}).
This ensures claims \ref{induction:reversal} and \ref{induction:turn} in this case.

Case 2.  $i+1=2$, $S_{l-1}$ is of length larger than $2$, and the robot is at the beginning of $S_l$ upon completion of shifting performed in line~\ref{ln:find-next-brick-if-shifting} of procedure \ref{alg:find-next-brick}.

By the inductive assumption, the robot is oriented on $S_{l-1}$.
By Lemma~\ref{lem:skipping-fake-switch}, the robot makes two steps along $\reversal{S_{l-1}}$ by executing instruction in line \ref{ln:return:omit-fake-switch} of procedure \ref{alg:return-to-marker} and thus it is at the third cell of $\reversal{S_{l-1}}$ when the return switch-traversal in line \ref{ln:return:switch-traversal} starts.
Thus, since the robot is at a cell of $S_{l-1}$,  condition \ref{it:return-segment-a1} of Lemma~\ref{lem:return-segment} is satisfied.
By arguments analogous to the ones in the base case of induction, condition \ref{it:return-segment-a2} holds as well.
Thus by Lemma~\ref{lem:return-segment}, claims \ref{induction:reversal} and \ref{induction:turn} hold for $i+1$.

Case 3. $i+1> 2$ or ($i+1=2$ and the robot is not at the beginning of $S_l$ upon completion of shifting performed in line~\ref{ln:find-next-brick-if-shifting} of procedure \ref{alg:find-next-brick}).

The robot is at the beginning of $S_{l-1}$ when the return switch-traversal starts.
As before,  conditions \ref{it:return-segment-a1} and \ref{it:return-segment-a2} of Lemma~\ref{lem:return-segment} are satisfied.
Thus by Lemma~\ref{lem:return-segment}, claims \ref{induction:reversal} and \ref{induction:turn} hold for $i+1$.
\end{proof}

We say that the field is \emph{strongly structured} if the following conditions are satisfied:
\begin{enumerate}[label={\normalfont(T\arabic*)},align=left,leftmargin=*]
 \item\label{it:strongly-structured:structured} the field is structured,
 \item\label{it:strongly-structured:no-lost} there are no lost components, and
 \item\label{it:strongly-structured:marker} the robot is at the marker.
\end{enumerate}
We will show that the field is strongly structured at the beginning of any iteration of the `while' loop in Algorithm \ref{alg:nest}.

The following lemma shows that while backtracking on a search walk, the robot reconnects all switch components that were disconnected during 
procedure \ref{alg:find-next-brick}.
Moreover, this backtracking stops when the robot is at the marker.

\begin{lemma} \label{lem:return-nothing-lost}
Suppose that at the beginning of an iteration of the ``while'' loop in algorithm \ref{alg:nest} the field is strongly structured.
Then there are no lost components in the field and the robot is at the marker after the execution of procedures \ref{alg:find-next-brick} and \ref{alg:return-to-marker}.
\end{lemma}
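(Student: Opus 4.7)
The plan is to combine Lemma~\ref{lem:return-backtracking} with an analysis of the two kinds of new components that procedure \ref{alg:find-next-brick} may create, namely switch-components and the shift-component. After that I will argue separately that the robot ends at the marker and that no component is lost.

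For the robot's final position, I would invoke Lemma~\ref{lem:return-backtracking} to conclude that during procedure \ref{alg:return-to-marker} the robot traverses $\reversal{\W}$, where $\W$ is the search walk that was built in procedure \ref{alg:find-next-brick}. The first cell of $\W$ was chosen in line \ref{ln:find-next-brick-go-to-marker} of procedure \ref{alg:find-next-brick} as the nearest cell of a free component to the marker, and by condition \ref{it:structured-MPC} it is at distance $\distMC$ from the marker. Therefore, no later than the moment the robot reaches the first cell of $\W$ during the return switch-traversal, the modified termination condition \ref{it:c:marker} triggers and the traversal ends. Line \ref{ln:return:go-to-marker} then moves the robot the remaining few cells to the marker.

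To show that no component becomes lost, I would proceed in three stages. First, the rough disc, the marker and every free component different from $\C$ are untouched by either procedure, so they remain at the same (non-lost) distance from the rough disc as before. Second, for the switch-components created by the forward switch-traversal of $\W$, I will show that at each break point visited in reverse along $\reversal{\W}$ the robot performs a switch that is the inverse of the forward one: by Observation~\ref{obs:side-reversal} the chirality of the segment is flipped, so what was the ``full cell to the right'' in the forward switch becomes the ``full cell to the left'' in the reverse switch, and the net effect is to move the brick back to its original cell, reattaching the switch-component to the $\W$-component. Third, for the shift-component I will invoke Lemma~\ref{lem:shifting}\ref{it:shifting:shift-component}: either it coincides with the $\W$-component, or it sits in the neighborhood of the penultimate special cell $c_l$, adjacent to the $\W$-component through the cell $c'$. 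Combining this with the fact that the first cell of $\W$ is never modified and therefore always witnesses that the $\W$-component contains a cell within distance $\distMC+\distPM=\distPC$ of the rough disc, one obtains that no component ends up at distance larger than $\distPC$ from the rough disc.

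The main obstacle I expect is the single-break-point argument that the reverse switch undoes the forward one. This needs careful bookkeeping of two things simultaneously: the orientation of the robot (so that ``left''/``right'' and ``$e$''/``$f$'' labels swap correctly), and the state of the robot (it is heavy on the way back, so by the definition of a switch it first places and then picks, with the same net rearrangement as the forward switch reversed). Once this is settled at one break point, the inductive structure already established in the proof of Lemma~\ref{lem:return-backtracking} turns it into the global reconnection statement needed for the lemma.
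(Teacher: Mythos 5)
Your overall architecture --- backtrack via Lemma~\ref{lem:return-backtracking}, terminate via condition \ref{it:c:marker} at the first cell of $\W$, and reconnect switch-components by inverse switches at break points --- matches the paper's proof, and those parts are sound. There are, however, two genuine gaps in the ``no lost components'' half.

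First, your treatment of the shift-component does not work as stated. When the cell $c'$ of Lemma~\ref{lem:shifting}\ref{it:shifting:shift-component} is empty, the shift-component is a \emph{separate} component whose cells all lie near the end of the search walk, hence possibly at distance $\Theta(s)\gg\distPC$ from the rough disc. Being adjacent to the $\W$-component ``through the empty cell $c'$'' does not merge the two components, so your conclusion that the shift-component inherits the $\W$-component's proximity to the rough disc is a non sequitur: if it is never reconnected, it is precisely a lost component. The paper closes this by observing that in this case the `if' in line~\ref{ln:return:if-shifting-first} is false, the return switch-traversal therefore starts at the penultimate special cell $c$, and its very first switch moves the brick from $f$ to $c'$, which fills $c'$ and physically glues the shift-component to the $\W$-component. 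You need this (or an equivalent) reconnection step.

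Second, you only argue that the reverse switches undo the forward ones; you do not rule out that the return switch-traversal performs \emph{additional} switches at cells that became break points only because shifting deposited new bricks adjacent to $\reversal{\W}$. Such a spurious switch would pick up a shifted brick and could strand a newly created component far from the rough disc. This is exactly the purpose of the special-case lines \ref{ln:return:omit-fake-switch}--\ref{ln:return:second-turn} of procedure \ref{alg:return-to-marker}, and the paper's proof devotes a paragraph to showing that the only candidate cells lie among the first cells of the second segment of $\reversal{\W}$ and are skipped by those lines. Your appeal to ``the inductive structure of Lemma~\ref{lem:return-backtracking}'' does not cover this, since that lemma only asserts that the robot follows $\reversal{\W}$, not that no unintended switches occur along the way.
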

\begin{proof}
Since, by assumption, the robot is at the marker at the beginning of the iteration of the `while' loop, the robot is at the marker when procedure \ref{alg:find-next-brick} starts.
Thus, in line \ref{ln:find-next-brick-go-to-marker} of this procedure, the robot goes to a free component $\C$.
A free component exists due to the condition of the loop.
The fact that such a component $\C$ is at distance $\distMC$ from the marker follows from the assumption that the field is structured.
Let $\W$ be the search walk traversed in a component $\C$ during the execution of procedure \ref{alg:find-next-brick}.
By Observation~\ref{obs:after-finding-brick}, the robot arrives at the last cell of $\W$ as a result of the switch-traversal in line~\ref{ln:find-next-brick-switch-traversal} of procedure \ref{alg:find-next-brick}.
By Lemma~\ref{lem:find-brick-correct}\ref{it:find-components}, each component obtained from $\C$ is either a $\W$-component, a switch-component or a shift-component.
By Lemma~\ref{lem:shifting}\ref{it:shifting:unique}, the shift-component is unique.
The shift-component equals the $\W$-component if shifting did not occur during the execution of procedure \ref{alg:find-next-brick}, i.e., when line~\ref{ln:find-next-brick-if-brick} of this procedure has been executed.
We now consider the case when the shifting did occur.
We have two possibilities, depending on whether the cell $c'$ from Lemma~\ref{lem:shifting}\ref{it:shifting:shift-component} is full or empty upon completion of shifting, i.e., at the beginning of the execution of procedure \ref{alg:return-to-marker}.
By Lemma~\ref{lem:shifting}\ref{it:shifting:shift-component}, if $c'$ is full, then it belongs to the shift component and to $\W$.
Thus, the shift-component is identical to the $\W$-component.
By Lemma~\ref{lem:shifting}\ref{it:shifting:shift-component}, if $c'$ is empty, then $c'$ is adjacent to a cell $c''$ of the shift component.
Since $c$ is a special cell, there is a full cell $f$ to the left or to the right of the robot.
Since $c'$ is to the left or to the right of the robot and $c'$ is empty, we have that $f\neq c'$.
Moreover, by Lemma~\ref{lem:shifting}\ref{it:shifting:shift-component}, $c$ is not the first cell of the last segment in $\W$.
Thus, the condition in the `if' statement in line~\ref{ln:return:if-shifting-first} is false and hence the robot starts the return switch-traversal in line~\ref{ln:return:switch-traversal} in procedure \ref{alg:return-to-marker} at the cell $c$.
In this switch-traversal, the robot moves the brick from $f$ to $c'$ thus reconnecting the shift component (whose full cell $c''$ is adjacent to $c'$) with the $\W$-component.
At this point, the shift component and the $\W$-component are equal and therefore the initial field $\C$ is transformed into a field with the $\W$-component and possibly many switch-components.

By Lemma~\ref{lem:return-backtracking}, the robot backtracks on $\W$ by executing procedure \ref{alg:return-to-marker}.
In other words, the robot traverses $\reversal{\W}$.
Since the robot traverses the entire search walk $\reversal{\W}$, by the definition of return switch-traversal, each switch-component is reconnected with the $\W$-component.
It remains to argue that during the return switch-traversal no lost component is created.
Such a component could be created, if there existed a full cell $c'$ adjacent to a cell $c$ of $\reversal{\W}$ such that the robot performs a switch at $c$.
This switch would remove the brick from $c'$ thus creating a lost component. We now show that this is impossible.
The cell $c'$ has been made full during the switch-traversal in line~\ref{ln:find-next-brick-switch-traversal} of procedure \ref{alg:find-next-brick}.
Thus, this happened during shifting in line~\ref{ln:find-next-brick-if-shifting}.
Since during shifting, after moving each brick, the robot makes one step forward along $\reversal{\W}$, $c$ does not belong to the first segment of $\reversal{\W}$.
By the definition of a search walk, $c$ belongs to the second segment $S$ of $\reversal{\W}$, and $c$ is the second cell in $S$.
By Lemma~\ref{lem:skipping-fake-switch}, the robot correctly executed the `if' instruction in line~\ref{ln:return:if-shifting-first} of procedure \ref{alg:return-to-marker}.
If $S$ is of length $2$, then the robot arrives at $c$ in line~\ref{ln:return:omit-fake-switch} and makes a turn in line~\ref{ln:return:turn}, and consequently the switch does not occur at $c$.
If $S$ is of length larger than $2$, then the robot makes two moves forward along $S$ in line~\ref{ln:return:omit-fake-switch}: the first of these moves brings the robot to $c$ and thus again no switch occurs when the robot is at $c$.
Hence we have proved that no lost component is created while backtracking on $\W$.

We finally prove that the robot is at the marker at the end of procedure \ref{alg:return-to-marker}.
By Lemma~\ref{lem:return-backtracking}, the robot traverses $\reversal{W}$ in line~\ref{ln:return:switch-traversal} of this procedure.
Thus, it becomes located at its end, which is the beginning of $\W$, and hence it can correctly check that it is at distance $\distMC$ from the marker, by the definition of switch-traversal.
More precisely, this is guaranteed by condition~\ref{it:c:marker} that terminates $\reversal{\W}$. Then the robot gets to the marker executing line \ref{ln:return:go-to-marker} of procedure \ref{alg:return-to-marker}.
\end{proof}

\subsection{Analysis of the   rough disc construction} \label{sec:analysis:pseudo-square}

The following lemma shows that procedure \ref{alg:sweeping} does not create lost components and makes sufficient room around the   rough disc.
Moreover, the cost of each call of the procedure is $O(s)$.

\begin{lemma} \label{lem:sweeping}
Suppose that the robot is at the marker and that there are no lost components.
An execution of procedure \ref{alg:sweeping} takes time $O(s)$ and results in a strongly structured field.
\end{lemma}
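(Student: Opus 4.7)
I will verify separately the $O(s)$ time bound and the three conditions defining a strongly-structured field. The main quantitative observation is that $|\partial\D|=O(\sqrt{|\D|})=O(\sqrt{z})=O(s)$, using the bound $s\in\Omega(\sqrt{z})$ recalled before Proposition~\ref{pro:lb}.

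\emph{Time analysis.} Line~\ref{ln:sweeping:go-to-square} costs $O(1)$, because the marker sits at distance $O(1)$ from $\D$ --- either $\distPM$, by the field being structured before the current main-loop iteration, or $2$, by the initialization in algorithm~\ref{alg:nest}. The counterclockwise traversal in line~\ref{ln:sweeping:counterclockwise-traversal} costs $O(|\partial\D|)=O(s)$. After each step of this traversal, the \texttt{for}-loop in line~\ref{ln:sweeping:for} inspects only the $O(1)$ cells in the robot's disc of radius $\distPC$, contributing $O(s)$ in aggregate. The total number of bricks relocated in lines~\ref{ln:sweeping:pick-from-c}--\ref{ln:sweeping:drop-and-return} is $O(|\partial\D|)=O(s)$, because every such brick lies in the annulus of width $\distPC$ around $\D$, whose area is $O(|\partial\D|)$. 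For the cost of a single relocation I split into two cases. In the initial call of Sweep (from line~\ref{ln:nest:init-sweeping} of algorithm~\ref{alg:nest}), $|\partial\D|=O(1)$, so at most $O(1)$ bricks are relocated, and even an $O(s)$ cost per relocation fits the budget. In every later call (from line~\ref{ln:pseudo-square:sweeping} of procedure~\ref{alg:extend-pseudo-square}), I invoke the invariant maintained by the preceding procedures that all free components are confined to a constant-width annulus around $\D$; combined with the no-lost-components hypothesis, this bounds each outward walk in line~\ref{ln:sweeping:go-to-drop} by $O(1)$. Finally, lines~\ref{ln:sweeping:if-component}--\ref{ln:sweeping:last-trip-to-marker} cost $O(1)$: a free component, if any, is reached within a bounded number of steps, and the new marker is placed in a local neighborhood.

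\emph{Strongly-structured property at the end.} Condition~\ref{it:structured-gap} (gap of width $\distPC$) follows since every cell at distance at most $\distPC$ from $\D$ lies within distance $\distPC$ of some border cell of $\D$, hence is visited by the loop and, if full and distinct from the marker, is relocated to distance $\geq\distPC$ from $\D$. Condition~\ref{it:structured-MPC} (marker at distance $\distPM$ from $\D$ and $\distMC$ from a free component) is enforced by line~\ref{ln:reposition-marker}, executed whenever a free component remains. For condition~\ref{it:strongly-structured:no-lost} (no lost component), observe that each relocated brick is dropped at the \emph{first} empty cell encountered along a direction that strictly increases the distance to $\D$; hence it lands within $O(1)$ of the threshold distance $\distPC$, and any component it belongs to is at distance at most $\distPC+O(1)$ from $\D$, so it is not lost. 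Bricks that are not relocated remain in their components, which stay unchanged. Condition~\ref{it:strongly-structured:marker} is enforced by line~\ref{ln:sweeping:last-trip-to-marker}.

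\emph{Main obstacle.} The delicate point is the per-relocation bound in later calls of Sweep: one has to rule out the scenario in which the robot, carrying a brick and walking outward from $\D$ in line~\ref{ln:sweeping:go-to-drop}, must traverse a long chain of full cells before landing on an empty one. This is exactly where the invariant that confines free components to a constant-width annulus around $\D$ is essential, and tracking it across the mutual interaction of procedures FindNextBrick, ReturnToMarker, and ExtendRoughDisc will be the core of the argument; the no-lost-components hypothesis by itself only bounds the closest cell of each component, not its extent. A minor technical nuance is that the same cell may satisfy the condition of line~\ref{ln:sweeping:for} from two distinct border positions; this is handled automatically because, once relocated, a cell sits at distance $\geq\distPC$ from $\D$ and so no longer triggers the condition.
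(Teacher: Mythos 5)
Your correctness argument (conditions \ref{it:structured-MPC}, \ref{it:structured-gap}, \ref{it:strongly-structured:no-lost}, \ref{it:strongly-structured:marker}) matches the paper's in substance, but the time analysis has a genuine gap, and you have in fact located it yourself in your ``main obstacle'' paragraph: the cost of the outward walks in line~\ref{ln:sweeping:go-to-drop}. Your proposed resolution --- an invariant that ``all free components are confined to a constant-width annulus around $\D$'' --- is false and cannot be established: the free component is essentially the residue of the original field, whose cells may lie at distance $\Theta(s)$ from the rough disc; conditions \ref{it:structured-MPC} and \ref{it:structured-gap} only pin down the \emph{nearest} cell of a free component, not its extent. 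Consequently a single walk in line~\ref{ln:sweeping:go-to-drop} can legitimately have length $\Theta(s)$ (the robot must pass over a long chain of full cells before reaching an empty cell at distance at least $\distPC$ from $\D$), and since up to $\Theta(s)$ bricks may need relocation in one call, a per-walk $O(1)$ bound is both unprovable and unnecessary; the naive product would be $O(s^2)$.

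The paper closes this gap by a different accounting. It classifies the full cells $c$ that trigger a relocation according to how they got there. If $c$ was filled during the current iteration of the main loop, it must have been filled by shifting (switch-components are all reabsorbed by Lemma~\ref{lem:return-nothing-lost}), and the geometry of shifting along the last segment $S$ forces the resulting sweep-walk to have length at most $2$ when it is perpendicular to $S$, with at most one exceptional (parallel) long walk. If $c$ was already full before the iteration, it enters the range $\distPC$ only because the rough disc grew by one cell, so there are only $O(1)$ such cells, each costing $O(\sqrt z + s)=O(s)$. The total is therefore $O(s)\cdot O(1)+O(1)\cdot O(s)=O(s)$. So the count of relocations is bounded as you say, but the per-relocation cost must be handled by this case split on the \emph{provenance} of each near-disc brick, not by a confinement invariant. (Two minor further slips: checking the existence of a free component in line~\ref{ln:sweeping:if-component} requires another $O(s)$ border traversal, not $O(1)$; and a component reached at distance $\distPC+O(1)$ would already be \emph{lost} under the paper's strict definition --- the correct statement is that each dropped brick lands either at distance exactly $\distPC$ or adjacent to an existing non-lost full cell.)
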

\begin{proof}
We first prove that the field is structured after executing the procedure.
The robot can correctly go to the nearest cell of the   rough disc $\D$ in line~\ref{ln:sweeping:go-to-square}.
Indeed, in a call to procedure \ref{alg:sweeping} in line~\ref{ln:nest:init-sweeping} of algorithm \ref{alg:nest} this is because $|\D|=1$ and $\D$ is at distance $2$ from the robot.
In each call made in procedure \ref{alg:extend-pseudo-square} this is because the robot is at the marker and, by Lemma~\ref{lem:sweeping}\ref{it:find-gap}, the   rough disc and the marker form two separate components.
While traversing the border of the   rough disc in line~\ref{ln:sweeping:counterclockwise-traversal}, the robot iterates over each cell $c$ of the gap in all executions of the `for' loop in line~\ref{ln:sweeping:for}.
Since the distance between the current location $c'\in \D$ and $c$ is at most $\distPC$, the robot can correctly execute instructions in line~\ref{ln:sweeping:pick-from-c}.
In line~\ref{ln:sweeping:go-to-drop}, the robot finds the first empty cell in the direction $d$ away from $\D$.
Then, in line~\ref{ln:sweeping:drop-and-return}, the robot drops the brick at this cell and returns to $c'$.
We argue that the latter is feasible.
This is done by first finding the closest empty cell $e$ in the direction opposite to $d$.
This cell is at distance at most $\distPC$ from the   rough disc, due to the condition used in line~\ref{ln:sweeping:go-to-drop}.
Thus, the robot can remember the relative position of $e$ with respect to $c'$ because the distance between them is bounded.
The latter is due to the selected direction $d$. Indeed, the fact that direction $d$ is away from $\D$ implies that the prefix of the walk in line~\ref{ln:sweeping:go-to-drop} consisting of cells at distance at most $\distPC$ from $c'$ is of bounded length.

The above proves that each cell at distance $\distPC$ from the   rough disc, except the marker, is empty upon completion of the traversal in line~\ref{ln:sweeping:counterclockwise-traversal}.
Since the `for' loop in line~\ref{ln:sweeping:for} does not select cells at distance larger than $\distPC$ from the   rough disc, the   rough disc has a gap of width $\distPC$, i.e., the field satisfies condition \ref{it:structured-gap}.
Moreover, this implies that no lost component is created during the traversal in line~\ref{ln:sweeping:counterclockwise-traversal} because each brick dropped in line~\ref{ln:sweeping:drop-and-return} is placed next to a full cell (see line~\ref{ln:sweeping:go-to-drop}).

Checking if there exists a free component $\C$ in line \ref{ln:sweeping:if-component} can be done by traversing the border of the   rough disc because, by assumption, there are no lost components.
By the same argument, the robot can relocate the marker in line~\ref{ln:reposition-marker} of procedure \ref{alg:sweeping}.
This also ensures that the field satisfies condition \ref{it:structured-MPC}.
Thus, the field is structured, i.e., it satisfies condition \ref{it:strongly-structured:structured}.

By assumption, there are no lost components, and each brick dropped in line~\ref{ln:sweeping:drop-and-return} is placed next to a full cell.
Thus, no lost component is created during procedure \ref{alg:sweeping}, which implies that the field satisfies condition \ref{it:strongly-structured:no-lost}.
According to line~\ref{ln:sweeping:last-trip-to-marker} of procedure \ref{alg:sweeping}, upon its completion the robot is at the marker, which ensures condition \ref{it:strongly-structured:marker}.
Thus, the field is strongly structured as required.

Now we analyze the time needed to perform an execution of procedure \ref{alg:sweeping}.
For each cell $c$ selected in the `for' loop in line~\ref{ln:sweeping:for}, all the following steps take time $O(1)$, possibly except for the walk performed in line~\ref{ln:sweeping:go-to-drop} and the return from it in line~\ref{ln:sweeping:drop-and-return}.
The cells traversed in line \ref{ln:sweeping:go-to-drop} are called a \emph{sweep-walk}.
Thus, to estimate the execution time of procedure \ref{alg:sweeping}, we estimate the total length of all such walks that occur during the execution.

The call in line~\ref{ln:nest:init-sweeping} in algorithm \ref{alg:nest} takes time $O(s)$ because, for the initial   rough disc of size $1$, $O(1)$ cells need to be moved to ensure property~\ref{it:structured-gap} of the field, and for each such cell, the sweep-walk is of length $O(s)$.

We now consider one execution of procedure \ref{alg:sweeping} called in procedure \ref{alg:extend-pseudo-square}.
In order to analyze the time of executing this call, we divide sweep-walks depending on their length.
Let $W_2$ be all sweep-walks of length at most $2$ and let $W_{*}$ be all the remaining sweep-walks.
A sweep-walk $X$ is traversed by the robot because condition \ref{it:structured-gap} is violated, i.e., there is a full cell $c$ at distance less than $\distPC$ from the   rough disc.
We consider two cases.

In the first case, some brick has been dropped at $c$ in the iteration of the `while' loop of algorithm \ref{alg:nest} in which the considered call to procedure \ref{alg:sweeping} occurs.
The event of dropping the brick occurred after the previous call to procedure \ref{alg:sweeping} because this previous call, as we have proved above, left the field structured.
Thus, this occurred during an execution of procedure \ref{alg:find-next-brick} or procedure \ref{alg:return-to-marker}.
The cell $c$ does not become full as a result of a switch because, by Lemma~\ref{lem:return-nothing-lost}, all switch-components are reconnected with the component in which the agent is performing a return switch-traversal in procedure \ref{alg:return-to-marker}.
Thus, $c$ becomes full during shifting.
Let $S$ be the last segment traversed by the robot during the execution of procedure \ref{alg:find-next-brick}.
Note that the brick at $c$ has been moved from a cell $c'$ of $S$ during shifting.
The cell $c$ is at distance $2$ from $c'$.
If the walk $X$ is perpendicular to $S$, then it has 2 cells because the neighbor of $c$ in $S$ is empty by the definition of shifting.
If the walk $X$ is parallel to $S$, then $c$ is the single cell, among those filled during shifting, that results in a sweep-walk, due to the direction away from $\D$ in procedure \ref{alg:sweeping}.
Thus, in this first case, we have $O(s)$ sweep-walks in $W_2$ and at most one sweep-walk in $W_*$.

In the second case, the cell $c$ has been full in the iteration of the `while' loop of algorithm \ref{alg:nest} in which the considered call to procedure \ref{alg:sweeping} occurs.
Thus, this cell results in a sweep-walk because the robot extended the   rough disc.
However, for each such extension, there are $O(1)$ such cells $c$.
Hence, in the second case we have $O(1)$ sweep-walks in $W_2$ and $O(1)$ sweep-walks in $W_*$.

Note that the length of a sweep-walk in $W_*$ is $O(\sqrt{z}+s)=O(s)$ because the span of the final   rough disc is $O(\sqrt{z})$.
Thus, in each iteration of the `while' loop of algorithm \ref{alg:nest}, the total length of all sweep-walks is $O(s)$.
\end{proof}

Whenever procedure \ref{alg:extend-pseudo-square} is called in our algorithm, the robot is at the marker. The following lemma says that it can correctly execute the procedure.

\begin{lemma} \label{lem:extend-pseudo-square-correct}
If the robot is at the marker, carries a brick, and there are no lost components, then the robot can correctly execute procedure \ref{alg:extend-pseudo-square} in time $O(s)$.
\end{lemma}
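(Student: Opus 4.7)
The procedure performs two actions: placing the carried brick at the unique cell $e$ such that $\D\cup\{e\}$ is a rough disc, and then calling procedure \ref{alg:sweeping}. I would verify correctness and the $O(s)$ time bound for each in turn.

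For line \ref{ln:pseudo-square:extend}, I would first argue that the cell $e$ is uniquely determined by the shape of $\D$ and can be detected locally. Since the marker is at bounded distance from $\D$ and (by hypothesis) no lost components exist, the robot can reach the closest cell of $\D$ by walking away from the marker toward $\D$ in $O(1)$ steps. From that cell it performs a counterclockwise traversal of the border of $\D$. The sequence of rough discs constructed by the algorithm is deterministic, and at any step the border of $\D$ consists of regular staircases with at most one distinguished ``current extension'' position: either $\D$ is a disc of radius $r$, in which case $e$ is the North neighbor of the East-most cell and is locally distinguishable by the symmetric staircase meeting, or $\D$ is a partially completed layer, in which case $e$ is the cell immediately following the last-added cell in the counterclockwise order, locally distinguishable by a convex-corner anomaly in the otherwise regular staircase. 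In each case the robot identifies $e$ using its bounded-radius local view (radius $r=8$, as fixed in the preliminaries). The robot then moves to $e$, drops the brick, and retraces the same border path back to the marker.

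For the time bound of this step, the border of $\D$ has length $O(\sqrt{|\D|}) = O(\sqrt{z})$, and since $s\in\Omega(\sqrt{z})$, this is $O(s)$. The traversal to find $e$, the placement, and the return to the marker are each performed in at most a constant number of passes along the border and one excursion of length $\distPM$, giving total time $O(s)$.

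For line \ref{ln:pseudo-square:sweeping}, after the placement in line \ref{ln:pseudo-square:extend} the robot is back at the marker by the definition of \emph{places}, carries no brick, and no lost component has been created: the only modification was adding one brick at $e$, which is adjacent to $\D$, so no previously existing component has been split or moved away. Thus the hypotheses of Lemma~\ref{lem:sweeping} are satisfied and the call to procedure \ref{alg:sweeping} is correct and takes time $O(s)$. Summing the two contributions yields $O(s)$, as required.

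The main obstacle is the argument in the first step that the robot can locate $e$ with finite memory on a border whose length is unbounded in $z$. The crux is the deterministic counterclockwise growth rule in the definition of a rough disc, which guarantees that at every intermediate stage the border contains exactly one constant-size locally distinguishable ``insertion'' feature, making $e$ detectable from the bounded local view without any global coordinates.
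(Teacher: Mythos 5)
Your proposal is correct and follows essentially the same route as the paper's proof: one border traversal of length $O(\sqrt{z})=O(s)$ to locate and fill the unique extension cell $e$ (the paper compresses your local-detectability argument into the remark that the robot need only remember whether the current rough disc is a disc), followed by an appeal to Lemma~\ref{lem:sweeping} after checking its hypotheses. Your added detail on how $e$ is recognized locally and on why no lost component is created by the placement is a faithful elaboration rather than a different approach.
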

\begin{proof}
In order to execute the instruction in line~\ref{ln:pseudo-square:extend} of procedure \ref{alg:extend-pseudo-square}, it is enough for the robot to remember whether the current rough disc is a disc.
Adding a brick to the   rough disc requires performing one traversal along its border to find the correct location for the brick.
This takes time $O(s)$ because the size of the border of a   rough disc is $O(\sqrt{z})$ which is $O(s)$.
By Lemma~\ref{lem:sweeping}, sweeping in line~\ref{ln:pseudo-square:sweeping} can be correctly executed because the robot is at the marker when it is called and,  by assumption, there are no lost components.
Moreover, the time needed to execute procedure \ref{alg:sweeping} is $O(s)$.
\end{proof}

We are now able to prove our main result, i.e.,  Theorem~\ref{thm:main}.

\begin{proof}[Proof of Theorem~\ref{thm:main}]
We argue by induction on the number of iterations of the `while' loop in algorithm \ref{alg:nest} that at the beginning of the $i$-th iteration, for $i \geq 1$, the field is strongly structured.
The claim holds for $i=1$ because, by Lemma~\ref{lem:sweeping}, the call to procedure \ref{alg:sweeping} in line~\ref{ln:nest:init-sweeping} results in a field that is structured.
Assuming that the assertion holds for some $i\geq 1$, consider the $(i+1)$-th iteration.
By Lemma~\ref{lem:return-nothing-lost}, after executing procedures \ref{alg:find-next-brick} and \ref{alg:return-to-marker} in the $(i+1)$-th iteration, there are no lost components, and the robot is at the marker.
We argue that the robot is carrying a brick.
If procedure \ref{alg:find-next-brick} did not perform shifting then (see line~\ref{ln:find-next-brick-if-brick}) the robot is carrying a brick at the end of its execution due to Lemma~\ref{lem:find-brick-correct}\ref{it:find-robot}.
If shifting has been performed, then the robot is carrying a brick at the end of shifting, which is the end of the execution of procedure~\ref{alg:find-next-brick} (see line~\ref{ln:find-next-brick-if-shifting}), due to Lemma~\ref{lem:shifting}\ref{it:shifting:penultimate}.
Note that if the robot performing the return switch-traversal in procedure \ref{alg:return-to-marker} is carrying a brick prior to a switch, then it is carrying a brick upon completion of the switch.
Thus, by an inductive argument we can state that the fact that the robot is carrying a brick at the beginning of the return switch-traversal implies that the robot is carrying a brick at the end of it.
Thus in particular, the robot is carrying a brick when procedure \ref{alg:extend-pseudo-square} is called in line~\ref{ln:nest:extend-pseudo-square} of algorithm \ref{alg:nest}.
This implies that the assumptions of Lemma~\ref{lem:extend-pseudo-square-correct} are satisfied.
By this lemma, the execution of procedure \ref{alg:extend-pseudo-square} extends the   rough disc by one brick.
The call to procedure \ref{alg:sweeping} in line~\ref{ln:pseudo-square:sweeping} of procedure \ref{alg:extend-pseudo-square} results in a field that is strongly structured, in view of Lemma \ref{lem:sweeping}.
This completes the inductive argument for the $(i+1)$-th iteration. Thus we showed that, at the beginning of each iteration of the `while' loop in algorithm \ref{alg:nest}  the field is strongly structured.

The above implies that there will be $z-2$ iterations of the `while' loop in algorithm \ref{alg:nest} and, upon the exit from the loop, there are only two components: the   rough disc of size $z-1$ and the marker.
By Proposition~\ref{pro:nest}, adding the brick from the marker in line~\ref{ln:nest:final-brick} to the rough disc results in  a nest.

By the definition of a search walk, the length of any search walk is $O(s)$.
The time needed to execute the two procedures that traverse the search walk and backtrack on it is linear in the length of the walk.
Using Lemmas~\ref{lem:sweeping} and \ref{lem:extend-pseudo-square-correct}, the overall time of algorithm \ref{alg:nest} is $O(sz)$.
By Proposition \ref{pro:lb}, this time is worst-case optimal.
\end{proof}

\section{Conclusion}

We designed a finite deterministic automaton that builds the most compact structure starting from any connected field of bricks, and does it in optimal time. 
An interesting problem yielded by our research is to characterize the classes of target structures that can be built by a single automaton, starting from any connected field of bricks in the grid.
Another problem is that of how the building task parallelizes, i.e., how much time many automata use to build some structure.

\pagebreak



\end{document}